\pgfplotsset{compat=newest}
\tiny\color{gray},
\def\BibTeX{{\rm B\kern-.05em{\sc i\kern-.025em b}\kern-.08em
    T\kern-.1667em\lower.7ex\hbox{E}\kern-.125emX}}
\begin{document}

\renewcommand{\baselinestretch}{.96}
\title{Stream-based Decentralized Runtime Verification
}

%
\author{\IEEEauthorblockN{Ritam Ganguly}
    \IEEEauthorblockA{\textit{Department of Computer Science and Engineering} \\
        \textit{Michigan State University}\\
        East Lansing, USA \\
        gangulyr@msu.edu}
    \and
    \IEEEauthorblockN{Borzoo Bonakdarpour}
    \IEEEauthorblockA{\textit{Department of Computer Science and Engineering} \\
        \textit{Michigan State University}\\
        East Lansing, USA \\
        borzoo@msu.edu}
    }
%

\maketitle

\thispagestyle{plain}
\pagestyle{plain}

\newcommand{\MTL}{\textsf{\small MTL}\xspace}
\newcommand{\AMTL}{\textsf{\small AMTL}\xspace}
\newcommand{\LTL}{\textsf{LTL}\xspace}
\newcommand{\FLTL}{\texttt{FLTL}\xspace}
\newcommand{\lola}{\textsc{Lola}\xspace}

\newcommand{\intervalSet}{\mathbb{I}}
\newcommand{\naturalSet}{\mathbb{N}}
\newcommand{\realSet}{\mathbb{R}}
\newcommand{\realPlusSet}{\mathbb{R}_{\geq 0}}
\newcommand{\wholeSet}{\mathbb{Z}}
\newcommand{\wholePlusSet}{\mathbb{Z}_{\geq 0}}

\newcommand{\qed}{$~\blacksquare$}

\newcommand{\interval}{\mathcal{I}}
\newcommand{\Pred}{\mathsf{AP}}
\newcommand{\p}{\mathsf{p}}
\newcommand{\Time}{\tau}
\newcommand{\StrucV}{\bar{\mathcal{D}}}
\newcommand{\TimeV}{\bar{\tau}}
\newcommand{\Istart}{\mathit{start}}
\newcommand{\Iend}{\mathit{end}}

\newcommand{\tru}{\mathtt{true}}
\newcommand{\fals}{\mathtt{false}}
\newcommand{\ite}{\mathtt{ite}}
\newcommand{\val}{\mathit{val}}
\newcommand{\pr}{\texttt{T}}
\newcommand{\ab}{\texttt{F}}
\newcommand{\nul}{\natural}

\DeclareRobustCommand{\F}{\LTLdiamond}
\DeclareRobustCommand{\G}{\LTLsquare}
\DeclareRobustCommand{\U}{\,\mathcal U \,}
\DeclareRobustCommand{\X}{\LTLcircle}

\newcommand{\allVar}{\mathcal{V}}
\newcommand{\var}{v}
\newcommand{\domVar}{\mathcal{D}_v}

\newcommand{\stream}{\alpha}
\newcommand{\type}{\mathsf{T}}
\newcommand{\allType}{\mathbb{T}}

\newcommand{\allStream}{\mathcal{A}}
\newcommand{\Proc}{\mathcal{P}}
\newcommand{\Sys}{\textit{SYS}}
\newcommand{\globalC}{\mathcal{G}}
\newcommand{\Events}{\mathcal{E}}
\newcommand{\hb}{\rightsquigarrow}
\newcommand{\cc}{\mathcal{C}}
\newcommand{\ccAll}{\mathbb{C}}
\newcommand{\front}{\mathsf{front}}
\newcommand{\Sr}{\mathsf{Sr}}
\newcommand{\seg}{\textit{seg}}
\newcommand{\trace}{\alpha}
\newcommand{\RTime}{\sigma}
\newcommand{\RTimeV}{\bar{\sigma}}
\newcommand{\LTime}{\pi}
\newcommand{\LTimeV}{\bar{\pi}}

\newcommand{\Monitors}{\mathcal{M}}
\newcommand{\LS}{\mathit{LS}}
\newcommand{\LC}{\mathit{LC}}

\newcommand{\Pro}{\mathsf{Pr}}
\newcommand{\hbSet}{\mathsf{hbSet}}
\newcommand{\cond}{\mathsf{cond}}
\newcommand{\Cond}{\mathbf{COND}}
\newcommand{\Prob}{\mathcal Pr}
\newcommand{\Result}{\mathsf{Result}}

\newtheorem{definition}{Definition}
\newcommand{\InInt}[1]{\mathsf{InInt(#1)}}

\algrenewcommand\algorithmicindent{.5em}%

\newcommand{\code}[1]{\textsf{\small #1}\xspace}
\newcommand{\spec}{\mathsf{spec}}
\newcommand{\liveness}{\mathsf{liveness}}

\algdef{SE}[DOWHILE]{Do}{doWhile}{\algorithmicdo}[1]{\algorithmicwhile\ #1}

 \newtheorem{theorem}{Theorem}
 \newtheorem{lemma}{Lemma}
 \newtheorem{example}{Example}
 \newtheorem{proof}{Proof}
 
 \newcommand{\mytodo}[3]{\todo[linecolor=#1,backgroundcolor=#1!25,bordercolor=#1]{#2: #3}}
\newcommand{\myinline}[3]{\todo[inline,linecolor=#1,backgroundcolor=#1!25,bordercolor=#1]{#2: 
        #3}}

\newcommand{\borzootodo}[1]{\mytodo{red}{Borzoo}{#1}}
\newcommand{\borzooinline}[1]{\myinline{red}{Borzoo}{#1}}

\newcommand{\ritamtodo}[1]{\mytodo{green}{Ritam}{#1}}
\newcommand{\ritaminline}[1]{\myinline{green}{Ritam}{#1}}

\begin{abstract}

Industrial Control Systems (ICS) are often built from  geographically distributed components and often use programmable logic controllers for localized processes. 
Since verification of such systems is challenging because of both time sensitivity of the 
system specifications and the inherent asynchrony in distributed components, developing runtime assurance that verifies not just the correctness of different components, but 
also generates aggregated statistics of the systems is of interest. 
In this paper, we first present a general technique for runtime monitoring of distributed applications 
whose behavior can be modeled as input/output {\em streams} with an internal computation module 
in the partially synchronous semantics, where an imperfect clock synchronization algorithm is 
assumed.
Second, we propose a generalized stream-based decentralized runtime verification technique.
We also rigorously evaluate our algorithm on extensive synthetic experiments and several ICS and aircraft SBS message datasets.
\end{abstract}

\vspace{-1mm}
\section{Introduction}
\label{sec:intro}

Industrial Control Systems (ICS) are information systems to control 
industrial processes such as 
manufacturing, product handling, distribution, etc. It includes supervisory control and data 
acquisition systems used to control geographically dispersed assets and distributed control systems 
using a programmable logic controller for each of the localized processes. A typical programmable 
logic controller (PLC) receives data produced by a large number of sensors, fitted across the 
system. The data produced by these components are often the target of 
cyber and ransom-ware attack 
putting the security of the system in jeopardy. Since these systems are 
linked to essential 
services, any attack on these facilities put the users life on the front line. The integrity of 
the data produced from these distributed components is very important as the PLC's behavior is 
dictated by it. Recent attacks have shown that an attack on a company's ICS costs the company 
around \$5 million and 50 days of system down time. Additionally, 
according to a recent report~\cite{icsReport}, it takes the effected 
company around 191 days to fully recover and around 
54\% of all organization are vulnerable to such attacks.

In this paper, we advocate for a runtime verification (RV) approach, to monitor the behavior 
of a distributed system with respect to a formal specification. Applying RV 
to multiple 
components of an ICS can be viewed as the general problem of distributed 
RV, where a centralized or decentralized 
monitor(s) observe the behavior of a distributed system in which the processes do not share a 
global clock. Although RV deals with finite executions, the lack of a common global clock 
prohibits it from having a total ordering of events in a distributed setting. In other words, 
the monitor can only form a partial ordering of events which may yield different evaluations. 
Enumerating all possible interleavings of the system at runtime incurs in an exponential blowup, 
making the approach not scalable. To add to this already complex task, a PLC often requires 
time sensitive aggregation of data from multiple sources.

We propose an effective, sound and complete solution to distributed RV for the popular 
{\em stream-based} specification language \lola~\cite{lola05}. 
Compared to other temporal 
logic, \lola can describe both correctness/failure assertions along with statistical measures 
that can be used for system profiling and coverage analysis.
To present a high level of \lola example, consider two input 
streams $x$ and $y$ and a output stream, $\mathit{sum}$ as shown in 
Fig.~\ref{fig:intro1}. 
Stream $x$ has the value $3$ until time instance $2$ when it changes to 
$5$ and so on.

\begin{lstlisting}
input x:int
input y:int
output sum := x+y
\end{lstlisting}

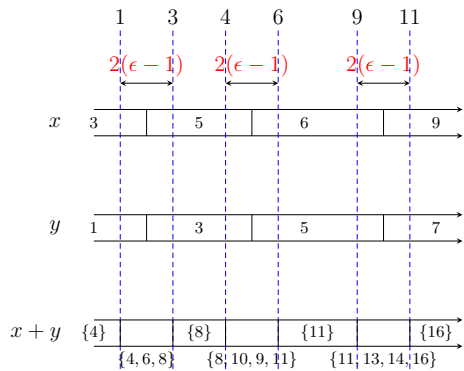
\begin{figure}
\centering
\scalebox{.7}{
	    \begin{tikzpicture}
        \tikzstyle{every node}=[font=\large]
        \tikzset{
            >=stealth,
            every state/.style={thick, fill=gray!10},
        }
        
        \draw (0,5) node[above] {\textcolor{white}{.}};
        
        \draw (-0.5,4) node[left] {$x$};
        
        \draw (-0.5,2) node[left] {$y$};

        \draw (-0.5,0) node[left] {$x+y$};
        
        \draw [->] (0,4.25) -- (7,4.25);
        \draw [->] (0,3.75) -- (7,3.75);
        
        \draw [->] (0,2.25) -- (7,2.25);
        \draw [->] (0,1.75) -- (7,1.75);

        \draw [->] (0,0.25) -- (7,0.25);
        \draw [->] (0,-0.25) -- (7,-0.25);
        
        \draw [black,fill=gray!10] (0,4) node[] {\small $3$};
        \draw[-] (1,4.25) -- (1,3.75);
        \draw [black,fill=gray!10] (2,4) node[] {\small $5$};
        \draw[-] (3,4.25) -- (3,3.75);
        \draw [black,fill=gray!10] (4,4) node[] {\small $6$};
        \draw[-] (5.5,4.25) -- (5.5,3.75);
        \draw [black,fill=gray!10] (6.5,4) node[] {\small $9$};
        
        \draw [black,fill=gray!10] (0,2) node[] {\small $1$};
        \draw[-] (1,2.25) -- (1,1.75);
        \draw [black,fill=gray!10] (2,2) node[] {\small $3$};
        \draw[-] (3,2.25) -- (3,1.75);
        \draw [black,fill=gray!10] (4,2) node[] {\small $5$};
        \draw[-] (5.5,2.25) -- (5.5,1.75);
        \draw [black,fill=gray!10] (6.5,2) node[] {\small $7$};

        \draw [blue, densely dashed] (0.5,5.75) -- (0.5,-0.75);
        \draw[<->] (0.5,4.75) -- (1.5,4.75) node[above, pos=0.5] {\textcolor{red}{$2(\epsilon-1)$}};
        \draw [blue, densely dashed] (1.5,5.75) -- (1.5,-0.75);
        \draw [blue, densely dashed] (2.5,5.75) -- (2.5,-0.75);
        \draw[<->] (2.5,4.75) -- (3.5,4.75) node[above, pos=0.5] {\textcolor{red}{$2(\epsilon-1)$}};
        \draw [blue, densely dashed] (3.5,5.75) -- (3.5,-0.75);
        \draw [blue, densely dashed] (5,5.75) -- (5,-0.75);
        \draw[<->] (5,4.75) -- (6,4.75) node[above, pos=0.5] {\textcolor{red}{$2(\epsilon-1)$}};
        \draw [blue, densely dashed] (6,5.75) -- (6,-0.75);

        \draw(0.5,5.75) node[above] {$1$};
        \draw(1.5,5.75) node[above] {$3$};
        \draw(2.5,5.75) node[above] {$4$};
        \draw(3.5,5.75) node[above] {$6$};
        \draw(5,5.75) node[above] {$9$};
        \draw(6,5.75) node[above] {$11$};

        \draw (0,0) node[] {\small $\{4\}$};
        \draw[-] (0.5,0.25) -- (0.5,-0.25);
        \draw (1,-0.5) node[] {\small $\{4, 6, 8\}$};
        \draw[-] (1.5,0.25) -- (1.5,-0.25);
        \draw (2,0) node[] {\small $\{8\}$};
        \draw[-] (2.5,0.25) -- (2.5,-0.25);
        \draw (3,-0.5) node[] {\small $\{8, 10, 9, 11\}$};
        \draw[-] (3.5,0.25) -- (3.5,-0.25);
        \draw (4.25,0) node[] {\small $\{11\}$};
        \draw[-] (5,0.25) -- (5,-0.25);
        \draw (5.5,-0.5) node[] {\small $\{11, 13, 14, 16\}$};
        \draw[-] (6,0.25) -- (6,-0.25);
        \draw (6.5,0) node[] {\small $\{16\}$};
        
    \end{tikzpicture}
}
	\caption{Partially Synchronous LOLA}
	\label{fig:intro1}
	\vspace{-6mm}
\end{figure}

We consider a fault proof decentralized set of monitors where each monitor only has a partial view 
of the system and has no access to a global clock. In order to limit the blow-up of states posed by 
the absence of the global clock, we make a practical assumption about the presence of a bounded 
clock skew $\epsilon$ between all the local clocks, guaranteed by a clock 
synchronization algorithm (like NTP~\cite{ntp}). This setting is known to be 
{\em partially synchronous}. 
As can be seen in Fig.~\ref{fig:intro1}, any two events less than $\epsilon = 2$ time apart 
is considered to be concurrent and thus the non-determinism of the time of occurrence of each event is restricted to $\epsilon - 1$ on either side. When attempting to evaluate the output stream 
$\mathit{sum}$, 
we need to take into consideration all the possible time of occurrence of the 
values. For example, when 
evaluating the value of $\mathit{sum}$ at time $1$, we need to consider the 
value of $x$ (resp. $y$) as $3$ and 
$5$ (resp. $1$ and $3$) which evaluates to $4$, $6$ and $8$. The same can be observed for 
evaluations across all time instances.

Our first contribution in this paper is introducing a partially synchronous 
semantics for \lola. 
In other words, we define \lola which takes into consideration a clock-skew of $\epsilon$ when 
evaluating a stream expression. Second, we introduce an SMT-based 
associated equation rewriting 
technique over a partially observable distributed system, which takes into 
consideration the 
values observed by the monitor and rewrites the associated equation. The 
monitors are able to 
communicate within themselves and are able to resolve the partially evaluated equations into 
completely evaluated ones.

We have proved the correctness of our approach and the upper and lower 
bound of the message complexity. Additionally, we have completely implemented 
our technique and report the results of rigorous synthetic experiments, as well as monitoring 
correctness and aggregated results of several ICS. 
As identified in~\cite{acz20}, most attacks on ICS components try to alter the 
value reported to the PLC in-order to make the PLC behave erroneously. Through our approach, we 
were able to detect these attacks in-spite of the clock asynchrony among the different components 
with deterministic guarantee. We also argue that our approach was able to evaluate system behavior 
aggregates that makes studying these system easier by the human operator. Unlike machine learning approaches (e.g., ~\cite{pma15, pma15a, bbbmap14}), our approach will never raise false negatives.
We put our monitoring technique to test, studying 
the effects of different parameters on the runtime and size of the message sent from one monitor to 
other and report on each of them.

\paragraph*{Organization} Section~\ref{sec:prelim} presents the background concepts. Partially synchronous \lola and the formal problem statement are introduced in Section~\ref{sec:pslola}.
Our RV technique is collectively presented in Sections
Section~\ref{sec:problem} -- \ref{sec:mon} followed by the 
experimental results in Section~\ref{sec:eval}. Related work is discussed in Section~\ref{sec:related} 
before we make concluding remarks in Section~\ref{sec:concl}. Details of syntax of \lola, proofs of correctness and more details about the ICS case studies can be found in the Appendix~\ref{sec:appendix}.
\section{Preliminaries -- Stream-based Specification Language (\lola)~\cite{lola05}}
\label{sec:prelim}

%

A \lola~\cite{lola05} specification describes the computation of output streams given a set of 
input streams. 
A {\em stream} $\stream$ of type $\type$ is a finite sequence of values, $t \in \type$.
Let $\stream(i)$, where $i \geq 0$, denote the value of the stream at time stamp $i$. We denote a 
stream of finite length (resp. infinite length) by $\type^*$ (resp. $\type^\omega$).

\begin{definition}\label{def:lola}
A \lola specification is a set of equations over typed stream variables of the form:
\begin{align*}
s_1 &= e_1(t_1, \cdots, t_m, s_1, \cdots, s_n) \\
\vdots &~~~ \vdots \\
s_n &= e_n(t_1, \cdots, t_m, s_1, \cdots, s_n)
\end{align*}
where $s_1, s_2, \cdots, s_n$ are called the {\em dependent variables}, $t_1, t_2, \cdots, t_m$ 
are called the {\em independent variables}, and $e_1, e_2, \cdots, e_n$ are the {\em stream 
expressions} over $s_1, \cdots, s_n, t_1, \cdots, t_m$. \qed
\end{definition}

Typically, {\em Input} streams are referred to as independent variables, 
whereas {\em output} streams are referred as dependent variable.
For example, consider the following \lola specification, where $t_1$ and 
$t_2$ are independent 
stream variables of type boolean and $t_3$ is an independent stream variable of type integer.
\begin{align*}
s_1 &= \tru \\
s_2 &= t_1 \lor (t_3 \leq 1) \\
s_3 &= \ite(s_3, s_4, s_4+1) \\
s_4 &= s_9[-1, 0] + (t_3 \mod 2) \\
\end{align*}
where, $\ite$ is the abbreviated form of {\it if-then-else} and stream 
expressions 
$s_7$ and $s_8$ refers to the stream $t_1$ with an offset of $+1$ and 
$-1$, respectively. Due to 
space constrains we present the full syntax of \lola in 
Appendix~\ref{sec:moreLOLA}.

The semantics of \lola specifications is defined in terms of the evaluation model, which 
describes the relation between input and output streams.

\begin{definition}\label{def:semantics-lola}
Given a \lola specification $\varphi$ over independent variables, $t_1, \cdots, t_m$, of type, 
$\type_1, \cdots, \type_m$, and dependent variables, $s_1, \cdots, s_n$ with type, 
$\type_{m+1}, \cdots, \type_{m+n}$, let $\tau_1, \cdots, \tau_m$ be the streams of length $N+1$, 
with $\tau_i$ of type $\type_i$. The tuple $\langle \stream_1, \cdots, \stream_n \rangle$ 
of streams of length $N+1$ is called the {\em evaluation model}, if for every equation in $\varphi$
$$s_i = e_i(t_1, \cdots, t_m, s_1, \cdots, s_n)$$
$\langle \stream_1, \cdots, \stream_n \rangle$ satisfies the following associated equations:
$$\stream_i(j) = \val(e_i)(j)~~~~~\text{ for } (1 \leq i \leq n) \wedge (0 \leq j \leq N)$$
where $\val(e_i)(j)$ is defined as follows. For the base cases:
\begin{align*}
\val(c)(j) &= c \\
\val(t_i)(j) &= \tau_i(j) \\
\val(s_i)(j) &= \stream_i(j)
\end{align*}
For the inductive cases, where $f$ is a function (e.g., arithmetic):
\begin{align*}
\val\Big(f(e_1, \cdots, e_k)\Big)(j) &= f\Big(\val(e_1)(j), \cdots, \val(e_k)(j)\Big) \\
\val\Big(\ite(b, e_1, e_2)\Big)(j) &= \mathsf{if }~ \val(b)(j) ~\mathsf{ then }~ \val(e_1)(j) \\
&~~~~~~\mathsf{ else }~ \val(e_2)(j) \\
\val(e[k,c])(j) &= 
\begin{cases}
\val(e)(j+k) & \textit{if } 0 \leq j + k \leq N \\
c  			 & \textit{otherwise} ~~~~~~~\blacksquare
\end{cases}
\end{align*} 
\end{definition}

The set of all equations associated with $\varphi$ is noted by $\varphi_\stream$.

\begin{definition}\label{def:lola-dgraph}
A {\em dependency graph} for a \lola specification, $\varphi$ is a weighted and directed graph
$G = \langle V, E \rangle$, with vertex set $V = \{s_1, \cdots, s_n, t_1, \cdots, t_m\}$. An 
edge $e : \langle s_i, s_k, w \rangle$ (resp. $e : \langle s_i, t_k, w \rangle$) labeled with 
a weight $w$ is in $E$ iff the equation for $\stream_i(j)$ in $\varphi_\stream$ contains 
$\stream_k(j+w)$ (resp. $\tau_k(j+w)$) as a subexpression. Intuitively, an edge records that 
$s_i$ at a particular position depends on the value of $s_k$ (resp. $t_k$), offset by $w$ positions.
\end{definition}

Given a set of synchronous input streams $\{ \stream_1, \stream_2, \cdots, \stream_m\}$ 
of respective type $\allType = \{ \type_1, \type_2, \cdots, \type_m\}$
and a \lola specification, $\varphi$, we evaluate the \lola specification, 
given by: 
$$(\stream_1, \stream_2, \cdots, \stream_m) \models_S \varphi$$
given the above semantics, where $\models_S$ denotes the synchronous 
evaluation.

\section{Partially Synchronous \lola}
\label{sec:pslola}

In this section, we extend the semantics of \lola to one that can 
accommodate reasoning about distributed systems.

\subsection{Distributed Streams}

Here, we refer to a global clock which will act as the ``real" timekeeper. It is to 
be noted that the presence of this global clock is just for theoretical reasons and 
it is not available to any of the individual streams.

We assume a {\em partially synchronous} system of $n$ streams, denoted by 
$\allStream = \{ \stream_1, \stream_2, \cdots, \stream_n \}$. For each 
stream $\stream_i$, where $i \in [1, |\allStream|]$, the local clock can be 
represented as a monotonically increasing function 
$c_i: \wholePlusSet \rightarrow \wholePlusSet$, where $c_i(\globalC)$ is the value of the 
local clock at global time $\globalC$. Since we are dealing with discrete-time systems, for 
simplicity and without loss of generality, we represent time with non-negative integers 
$\wholePlusSet$. For any two streams $\stream_i$ and $\stream_j$, where 
$i \neq j$, we assume:
$$
\forall \globalC \in \wholePlusSet. \mid c_i(\globalC) - c_j(\globalC) \mid < \epsilon,
$$
where $\epsilon > 0$ is the maximum clock skew. The value of $\epsilon$ is constant and 
is known (e.g., to a monitor). This assumption is met by the presence of an 
off-the-shelf clock 
synchronization algorithm, like NTP~\cite{ntp}, to ensure bounded clock skew among all streams. 
The local state of stream $\stream_i$ at time $\RTime$ is given by 
$\stream_i(\RTime)$, where $\RTime = c_i(\globalC)$, 
that is the local time of occurrence of the event at some global time $\globalC$.

\begin{definition}
A {\em distributed stream} consisting of $\allStream = \{\stream_1, 
\stream_2,  \ldots, \stream_n\}$ streams of length $N+1$ is represented by 
the pair $(\Events,  \hb)$, where
$\Events$ is a set of all local states (i.e., $\Events = \cup_{i \in [1,n], j\in[0, 
N]} \stream_i(j)$) partially ordered 
by Lamport's 
happened-before ($\hb$) relation~\cite{hb1978}, subject to the partial synchrony assumption:

\begin{itemize}
    \item For every stream $\stream_i$, $1 \leq i \leq |\allStream|$, all the 
    events happening on it are 
    totally ordered, that is, 
    $$\forall i,j,k  \in \wholePlusSet: (j < k) \rightarrow (\stream_i(j) \hb \stream_i(k))$$
    
    \item For any two streams $\stream_i$ and $\stream_j$ and two 
    corresponding events $\stream_i({k}), \stream_j({l}) \in \Events$, if $k + 
    \epsilon < l$ then, $\stream_i({k}) \hb \stream_j({l})$, where $\epsilon$ is 
    the maximum clock skew.

    \item For events, $e$, $f$, and $g$, if $e \hb f$ and $f \hb g$, then $e 
    \hb g$.\qed
\end{itemize}
\end{definition}

\begin{definition}
Given a distributed stream $(\Events, \hb)$, a subset of events $\cc \subseteq 
\Events$ is said to form a {\em consistent cut} if and only if when $\cc$ contains an event $e$, then 
it should also contain all such events that happened before $e$. Formally, 
$$
\forall e,f \in \Events. (e \in \cc) \land (f \hb e) \rightarrow f \in \cc.~\blacksquare
$$
\end{definition}

The frontier of a consistent cut $\cc$, denoted by $\front(\cc)$ is the set of all events that 
happened last in each stream in the cut. That is, $\front(\cc)$ is a set of 
$\stream_i(\textit{last})$ for each 
$i \in [1, |\allStream|]$ and $\stream_i(\textit{last}) \in \cc$. We denote 
$\stream_i(\textit{last})$ as the last event in $\stream_i$ 
such that $\forall \stream_i(\RTime) \in \cc. (\stream_i(\RTime) \neq \stream_i(\textit{last})) 
\rightarrow (\stream_i(\RTime) \hb \stream_i(\textit{last}))$.

\subsection{Partially Synchronous \lola}

We define the semantics of \lola specifications for partially synchronous 
distributed streams in terms of 
the evaluation model. The absence of a common global clock among the stream variables and the 
presence of the clock synchronization makes way for the output stream having multiple values at 
any given time instance. Thus, we update the evaluation model, so that $\alpha_i(j)$ and 
$\val(t_i)(j)$ are now defined by {\em sets} rather than just a single value.
This is due to nondeterminism caused by partial synchrony, i.e., the bounded clock skew $\epsilon$.

\begin{definition}\label{def:psync-eval-model}
Given a \lola~\cite{lola05} specification $\varphi$ over independent variables, 
$t_1, \cdots, t_m$ of type 
$\type_1, \cdots, \type_m$ and dependent variables, $s_1, \cdots, s_n$ of type 
$\type_{m+1}, \cdots, \type_{m+n}$ and $\tau_1, \cdots, \tau_m$ be the streams of length $N+1$, 
with $\tau_i$ of type $\type_i$. The tuple of streams $\langle\stream_1, \cdots, \stream_n\rangle$ of 
length $N+1$ with corresponding types is called the evaluation model in the partially synchronous 
setting, if for every equation in $\varphi$: 
$$s_i = e_i(t_1, \cdots, t_m, s_1, \cdots, s_n),$$
$\langle\stream_1, \cdots, \stream_n\rangle$ satisfies the following associated equations:
$$\stream_i(j) = \big\{\val(e_i)(k) \mid \max\{0,j-\epsilon+1\} \leq k \leq \min\{N,j+\epsilon-1\}\big\}$$
where $\val(e_i)(j)$ is defined as follows. For the base cases:
\begin{align*}
\val(c)(j) &= \{c\} \\
\val(t_i)(j) &= \big\{\tau_i(k) \mid \max\{0,j-\epsilon+1\} \leq k \leq \min\{N,j+\epsilon-1\}\big\} \\
\val(s_i)(j) &= \stream_i(j)
\end{align*}
For the inductive cases:
\begin{align*}
\val\Big(f(e_1, \cdots, e_p)\Big)(j) &= \Big\{ f(e_1', \cdots, e_p') \mid e_1' \in \val(e_1)(j), \cdots, \\
&~~~~~~~e_p' \in \val(e_p)(j) \Big\} \\
\val(\ite(b, e_1, e_2))(j) &= 
\begin{cases}
\val(e_1)(j) & \tru \in \val(b)(j) \\
\val(e_2)(j) & \fals \in \val(b)(j) \\
\end{cases} \\
\val(e[k,c])(j) &= 
\begin{cases}
\val(e)(j+k) & \textit{if } 0 \leq j + k \leq N \\
c            & \textit{otherwise}
\end{cases}
\end{align*} \qed
\end{definition}

\begin{example} 
Consider the \lola specification, $\varphi$, over the independent boolean variables \texttt{read} and 
\texttt{write}:
\begin{lstlisting}
input read:bool
input write:bool
output countRead := ite(read, countRead[-1,0] + 1, countRead[-1,0])
output countWrite := ite(write, countWrite[-1,0] + 1, countWrite[-1,0])
output check := (countWrite - countRead) <= 2
\end{lstlisting}
In Fig.~\ref{fig:psync-lola}, we have two input stream \textit{read} and \textit{write} which denotes the time instances where the corresponding events take place. It can be imagined that \textit{read} and \textit{write} are streams of type \texttt{boolean} with $\tru$ values at time instances $4, 6, 7$ and $2, 3, 5, 6$ and $\fals$ values at all other time instances respectively. We evaluate the above mentioned \lola specification considering a time synchronization constant, $\epsilon = 2$. The corresponding associated equations, $\varphi_\stream$, are:
\begin{align*}
\mathit{countRead}(j) &= 
\begin{cases}
\mathtt{ite}(\mathit{read}, 1, 0) & j = 0\\
\mathtt{ite}\Big(\mathit{read}, \mathit{countRead}(j- \\
~~~~~1) + 1, \mathit{countRead}(j)\Big) & j \in [1, N)
\end{cases} \\
\mathit{countWrite}(j) &= 
\begin{cases}
\mathtt{ite}(\mathit{write}, 1, 0) & j = 0\\
\mathtt{ite}\Big(\mathit{write}, \mathit{countWrite}(j- \\
~~~~~1) + 1, \mathit{countWrite}(j)\Big) & j \in [1, N)
\end{cases} \\
\mathit{check}(j) &= \Big(\mathit{countWrite}(j) - \mathit{countRead}(j)\Big) \leq 2
\end{align*}

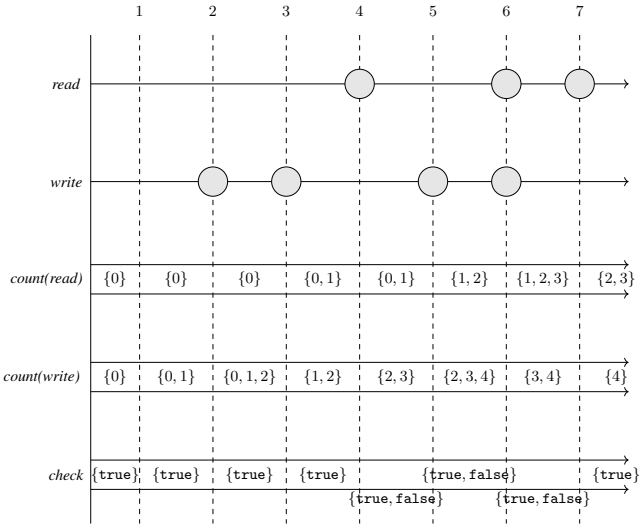
\begin{figure}
    \centering
    \centering 
    \scalebox{0.65}{
        \begin{tikzpicture}
        
            \tikzstyle{every node}=[font=\small]

            \draw [-] (0,10) -- (0, 0);
            
            \draw (-0.5,9) node[] {\textit{read}};
            \draw (-0.5,7) node[] {\textit{write}};
            \draw (-0.9,5) node[] {\textit{count(read)}};
            \draw (-1,3) node[] {\textit{count(write)}};
            \draw (-0.5,1) node[] {\textit{check}};
            
            \draw [->] (0,9) -- (11,9);
            \draw [->] (0,7) -- (11,7);

            \draw[dashed] (1, 10) -- (1, 0);
            \draw (1, 10.5) node[] {$1$};
            \draw[dashed] (2.5, 10) -- (2.5, 0);
            \draw (2.5, 10.5) node[] {$2$};
            \draw[dashed] (4, 10) -- (4, 0);
            \draw (4, 10.5) node[] {$3$};
            \draw[dashed] (5.5, 10) -- (5.5, 0);
            \draw (5.5, 10.5) node[] {$4$};
            \draw[dashed] (7, 10) -- (7, 0);
            \draw (7, 10.5) node[] {$5$};
            \draw[dashed] (8.5, 10) -- (8.5, 0);
            \draw (8.5, 10.5) node[] {$6$};
            \draw[dashed] (10, 10) -- (10, 0);
            \draw (10, 10.5) node[] {$7$};
            
            \draw[fill=black!10] (5.5,9) circle (0.3) node[above, yshift=0.1cm]{};
            \draw[fill=black!10] (8.5,9) circle (0.3) node[above, yshift=0.1cm]{};
            \draw[fill=black!10] (10,9) circle (0.3) node[above, yshift=0.1cm]{};

            \draw[fill=black!10] (2.5,7) circle (0.3) node[above, yshift=0.1cm]{};
            \draw[fill=black!10] (4,7) circle (0.3) node[above, yshift=0.1cm]{};
            \draw[fill=black!10] (7,7) circle (0.3) node[above, yshift=0.1cm]{};
            \draw[fill=black!10] (8.5,7) circle (0.3) node[above, yshift=0.1cm]{};

            \draw [->] (0, 5.3) -- (11, 5.3);
            \draw [->] (0, 4.7) -- (11, 4.7);
            \draw (0.5, 5) node[] {$\{0\}$};
            \draw (1.75, 5) node[] {$\{0\}$};
            \draw (3.25, 5) node[] {$\{0\}$};
            \draw (4.75, 5) node[] {$\{0, 1\}$};
            \draw (6.25, 5) node[] {$\{0, 1\}$};
            \draw (7.75, 5) node[] {$\{1, 2\}$};
            \draw (9.25, 5) node[] {$\{1, 2, 3\}$};
            \draw (10.75, 5) node[] {$\{2, 3\}$};

            \draw [->] (0, 3.3) -- (11, 3.3);
            \draw [->] (0, 2.7) -- (11, 2.7);
            \draw (0.5, 3) node[] {$\{0\}$};
            \draw (1.75, 3) node[] {$\{0, 1\}$};
            \draw (3.25, 3) node[] {$\{0, 1, 2\}$};
            \draw (4.75, 3) node[] {$\{1, 2\}$};
            \draw (6.25, 3) node[] {$\{2, 3\}$};
            \draw (7.75, 3) node[] {$\{2, 3, 4\}$};
            \draw (9.25, 3) node[] {$\{3, 4\}$};
            \draw (10.75, 3) node[] {$\{4\}$};

            \draw [->] (0, 1.3) -- (11, 1.3);
            \draw [->] (0, 0.7) -- (11, 0.7);
            \draw (0.5, 1) node[] {$\{\tru\}$};
            \draw (1.75, 1) node[] {$\{\tru\}$};
            \draw (3.25, 1) node[] {$\{\tru\}$};
            \draw (4.75, 1) node[] {$\{\tru\}$};
            \draw (6.25, 1) node[yshift=-5mm] {$\{\tru, \fals\}$};
            \draw (7.75, 1) node[] {$\{\tru, \fals\}$};
            \draw (9.25, 1) node[yshift=-5mm] {$\{\tru, \fals\}$};
            \draw (10.75, 1) node[] {$\{\tru\}$};
        \end{tikzpicture}
    }
    \caption{Partially Synchronous \lola Example}
    \label{fig:psync-lola}
    \vspace{-5mm}
\end{figure}

\end{example}

Similar to the synchronous case, evaluation of the partially synchronous \lola specification involves 
creating the dependency graph.

\begin{definition}\label{def:depend-graph}
A dependency graph for a \lola specification, $\varphi$ is a weighted directed multi-graph 
$G = \langle V,E \rangle$, with vertex set $V = \{s_1, \cdots, s_n, t_1, \cdots, t_m\}$. An edge 
$e : \langle s_i, s_k, w \rangle$ (resp. $e : \langle s_i, t_k, w \rangle$) labeled with a weight 
$w = \{\omega \mid p - \epsilon < \omega < p + \epsilon\}$ is in $E$ iff the equation for 
$\stream_i(j)$ contains $\stream_k(j + p)$ (resp. $\tau_k(j + p)$) as a sub-expression, for some $j$ 
and offset $p$.
\qed
\end{definition}

Intuitively, the dependency graph records that evaluation of a $s_i$ at a particular position 
depends on the value of $s_k$ (resp. $t_k$), with an offset in $w$. It is to be noted that there 
can be more than one edge between a pair of vertex $(s_i, s_k)$ (resp. $(s_i, t_k)$). Vertices 
labeled by $t_i$ do not have any outgoing edges.

\begin{example}
Consider the \lola specification over the independent integer variable \texttt{a}:

\begin{lstlisting}
input a : uint
output b1 := b2[1, 0] + ite(b2[-1,7] <= a[1, 0], b2[-2,0], 6)
output b2 := b1[-1,8]
\end{lstlisting}
Its dependency graph, shown in Fig.~\ref{fig:lola-dgraph} for $\epsilon = 
2$, has 1 edge from 
\texttt{b1} to \texttt{a} with a weight $\{0, 1, 2\}$. Similarly, there are 3 edges from 
\texttt{b1} to \texttt{b2} with 
weights $\{0, 1, 2\}, \{-2, -1, 0\}$ and $\{-3, -2, -1\}$ and 1 edge from \texttt{b2} 
to \texttt{b1} with a weight of $\{-2, -1, 0\}$

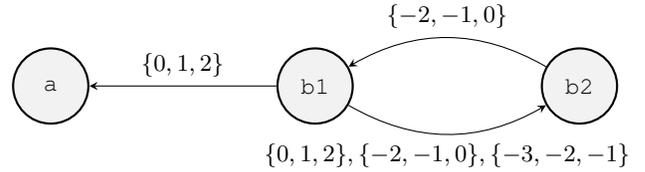
\begin{figure}
    \centering
    \tikzset{
        ->,
        >=stealth,
        node distance=100,
        every state/.style={thick,
        fill=gray!10},
        initial text=$ $
    }
    \tikzstyle{every node}=[font=\small]
    \centering
    \begin{tikzpicture}
        
        \node[state, minimum size=1cm] (1) {\texttt{a}};
        \node[state, minimum size=1cm, right of=1] (2) {\texttt{b1}};
        \node[state, minimum size=1cm, right of=2] (3) {\texttt{b2}};
        
        \draw (2) edge node[above]{$\{0, 1, 2 \}$} (1);
        \draw (2) edge[bend right] node[below]{$\{0, 1, 2\}, \{-2, -1, 0\}, \{-3, -2, -1\}$} (3);
        \draw (3) edge[bend right] node[above]{$\{-2, -1, 0 \}$} (2);
        
    \end{tikzpicture}
    \caption{Dependency Graph Example}
    \label{fig:lola-dgraph}
    \vspace{-3mm}
\end{figure}

\end{example}

Given a set of partially synchronous input streams $\{ \stream_1, 
\stream_2, 
\cdots, \stream_{|\allStream|}\}$ of respective type $\allType = \{ \type_1, \type_2, \cdots, 
\type_{|\allStream|}\}$
 and a \lola specification, $\varphi$, the evaluation of $\varphi$ is given by 
$$(\stream_1, \stream_2, \cdots, \stream_{|\allStream|}) \models_{PS} \varphi$$
where, $\models_{PS}$ denotes the partially synchronous evaluation.

\section{Decentralized Monitoring Architecture}
\label{sec:problem}

\subsection{Overall Picture}
\label{subsec:overallProb}

We consider a decentralized online monitoring system comprising of a fixed number of $|\Monitors|$ reliable monitor 
processes $\Monitors = \{M_1, M_2, \cdots, M_{|\Monitors|}\}$ that can communicate with each other by sending and 
receiving messages through a complete point-to-point bidirectional communication links. Each communication 
link is also assumed to be reliable, i.e., there is no loss or alteration of messages. Similar to the 
distributed system under observation, we assume the clock on the individual monitors are  
asynchronous, with clock synchronization constant = $\epsilon_M$.

Throughout this section we assume that the global distributed stream 
consisting of complete observations of $|\allStream|$ streams is only partially visible to each 
monitor. Each monitor process locally executes an identical sequential algorithm which consists of the 
following steps (we will generalize this approach in Section~\ref{sec:mon}). In other words, an 
evaluation iteration of each monitor consists of the following steps:

\begin{enumerate}
    \item Reads the a subset of $\Events$ events (visible to $M_i$) along with 
    the corresponding time and valuation of the events, which results in the construction 
    of a {\em partial distributed stream};

    \item Each monitor evaluates the \lola specification $\varphi$ given the partial distributed 
    stream;
    
    \item Every monitor, broadcasts a message containing rewritten associated equations of 
    $\varphi$, denoted $\LS$, and
    
    \item Based on the message received containing associated equations, each monitor 
    amalgamates the observations of all the monitors to compose a set of associated equations. 
    After a 
    evaluation iteration, each monitor will have the same set of associated equations to be evaluated 
    on the upcoming distributed stream.
\end{enumerate}

The message sent from monitor $M_i$ at time $\LTime$ to another monitor $M_j$, for all 
$i, j \in [1, |\Monitors|]$, during a evaluation iteration of the monitor is assumed to reach latest by time 
$\LTime + \epsilon_M$. Thus, the length of an {\em evaluation iteration} $k$ can be adjusted to make 
sure the message from all other monitors reach before the start of the next evaluation iteration.

\subsection{Detailed Description}
\label{subsec:detprob}

We now explain in detail the computation model (see Algorithm~\ref{alg:monitor}). Each monitor 
process $M_i \in \Monitors$, where 
$i \in [1, |\Monitors|]$, attempts to read $e \in \Events$, given the distributed stream, 
$(\Events, \hb)$. An event can either be observable, 
or not observable. Due to distribution, this results in obtaining a partial 
distributed stream $(\Events_i, \hb)$ defined below.

\begin{algorithm}[t]
	\caption{Behavior of a Monitor $M_i$, for $i \in [1, |\Monitors|]$}\label{alg:monitor}
		\begin{algorithmic}[1]
		\footnotesize
		\For{$j = 0$ to $N$}
			\State Let $(\Events_i, \hb_i)_j$ be the partial distributed stream view of $M_i$
			\State $\LS_j \leftarrow \big[(\Events, \hb) \models_{PS} \varphi_\stream \big]$
			\State {\bf Send: } broadcasts symbolic view $\LS_j$
			\State {\bf Receive: } $\Pi_j \leftarrow \{\LS^k_j \mid 1 \leq k \leq \Monitors\}$
			\State {\bf Compute: } $\LS_{j+1} \leftarrow \LC(\Pi_j)$
		\EndFor
	\end{algorithmic}

\end{algorithm}

\begin{definition}
Let $(\Events, \hb)$ be a distributed stream. We say that $(\Events', \hb)$ is a {\em partial 
distributed stream} for $(\Events, \hb)$ and denote it by $(\Events', 
\hb) \sqsubseteq (\Events, \hb)$ iff $\Events' \subseteq \Events$ (the happened before relation is 
obviously preserved). 
%
%
\qed
\end{definition}

We now tie partial distributed streams to a set of decentralized monitors and the fact that 
decentralized monitors can only partially observe a distributed stream. First, all un-observed 
events is replaced by $\natural$, i.e., for all $\stream_i(\RTime) \in \Events$ if 
$\stream_i(\RTime) \not\in \Events_i$ then $\Events_i = \Events_i \cup \{\stream_i(\RTime) = \natural\}$.

\begin{definition}
\label{def:view_const}
Let $(\Events, \hb)$ be a distributed stream and $\Monitors = \{M_1, M_2, \cdots, 
M_{|\Monitors|}\}$ be a set of monitors, where each monitor $M_i$, for $i \in [1, |\Monitors|]$ is 
associated with a partial distributed stream $(\Events_i, \hb) \sqsubseteq (\Events, \hb)$.
We say that these monitor observations are consistent if 
\begin{itemize}
    \item $\forall e \in \Events. \exists i \in [1, |\Monitors|]. e \in \Events_i$, and
    \item $\forall e \in \Events_i. \forall e' \in \Events_j. (e = e' \land e \neq \natural) \oplus \Big( (e = 
    \natural \lor e' = \natural) \Big)$,
\end{itemize}
where $\oplus$ denoted the exclusive-or operator.
\end{definition}

%
In a partially synchronous system, there are different ordering of events and each unique ordering of events might evaluate to different values. Given a distributed stream, $(\Events, \hb)$, a sequence of consistent cuts is of the form $\cc_0\cc_1\cc_2\cdots\cc_N$, where for all $i \geq 0$:
(1) $\cc_i \subseteq \Events$, and 
(2) $\cc_i \subseteq \cc_{i+1}$. 

Given the semantics of partially-synchronous \lola, evaluation of output stream variable $s_i$ at 
time instance $j$ requires events $\stream_i(k)$, where $i \in [1, |\allStream|]$ and 
$k \in \Big\{\LTime \mid \max\{0, j - \epsilon + 1 \} \leq \LTime \leq \{N, j + \epsilon - 1\}\Big\}$. To translate monitoring of a distributed stream to a synchronous stream, we make sure that the 
events in the frontier of a consistent cut, $\cc_j$ are $\stream_i(k)$.

Let $\ccAll$ denote the set of all valid sequences of consistent cuts. We define the set of all 
synchronous streams of $(\Events, \hb)$ as follows:
$$\Sr(\Events, \hb) = \Big\{\front(\cc_0)\front(\cc_1)\cdots \mid \cc_0\cc_1\cdots \in \ccAll \Big\}$$
Intuitively, $\Sr(\Events, \hb)$ can be interpreted as the set of all possible 
``interleavings''.
%
The evaluation of the \lola  specification, $\varphi$, with respect to $(\Events, \hb)$ is the following :
\begin{align*}
\Big[(\Events, \hb) \models_{PS} \varphi \Big] &= \Big\{(\stream_1, \cdots, \stream_n) \models_{S} 
\varphi \mid (\stream_1, \cdots, \stream_n) \in \\
&~~~~~~~~~~\Sr(\Events, \hb) \Big\}
\end{align*}
This means that evaluating a partially synchronous distributed stream with respect to a \lola specification results 
in a set of evaluated results, as the computation may involve several streams. This also enables reducing the problem from evaluation of a partially synchronous distributed system to the evaluation of multiple synchronous streams, each evaluating to unique values for the output stream, with message complexity
$$O\big(\epsilon^{|\allStream|} N |\Monitors|^2\big) \;\;\; \Omega(N |\Monitors|^2)$$

\subsection{Problem Statement}
\label{subsec:prob}

The overall problem statement requires that upon the termination of 
the Algorithm~\ref{alg:monitor}, the verdict of all the monitors in the decentralized monitoring architecture is the 
same as that of a centralized monitor which has the global view of the system
$$\forall i \in [1, m]: \Result_i = \Big[(\Events, \hb) \models_{PS} \varphi\Big]$$
where $(\Events, \hb)$ is the global distributed stream and $\varphi$ is the \lola 
specification with $\Result_i$ as the evaluated result by monitor $M_i$.

\section{Calculating $\LS$}
\label{sec:progress}


In this section, we introduce the rules of rewriting \lola associated equations given the 
evaluated results and observations of the system. In our distributed 
setting, evaluation of a \lola specification involves generating a set of synchronous streams and evaluating 
the given \lola specification on it (explained in Section~\ref{sec:smt}). Here, we make use of  
the evaluation of \lola specification into forming our local observation to be shared with other 
monitors in the system.


Given the set of synchronous streams, $(\stream_1, \stream_2, \cdots, \stream_{|\allStream|})$, 
the symbolic locally computed result $\LS$ (see Algorithm~\ref{alg:monitor}) consists of associated \lola equations, which either needs more 
information (data was unobserved) from other monitors to evaluate or the concerned monitor needs 
to wait (positive offset). In either case, the associated \lola specification is shared with all 
other monitors in the system as the missing data can be observed by either monitors.
We divide the rewriting rules into three cases, depending upon the observability of the value 
of the independent variables required for evaluating the expression $e_i$ for all 
$i \in [1, n]$. Each stream expression is categorized into three cases (1) completely unobserved, 
(2) completely observed or (3) partially observed. This can be done easily by going over the 
dependency graph and checking with the partial distributed stream read by the corresponding 
monitor.


\noindent{\bf Case 1 (Completely Observed). } Formally, a completely observed stream 
expression $s_i$ can be identified from the dependency graph, $G = \langle V,E \rangle$, as for all 
$s_k$ (resp. $t_k$) 
$\langle s_i, s_k, w \rangle \in E$ (resp. $\langle s_i, t_k, w \rangle \in E$), 
$s_k(j+w) \neq \natural$ (resp. $t_k(j+w) \neq \natural$) are observed for time instance $j$. If yes, this signifies, that 
all independent and dependent variables required to evaluate $s_i(j)$, is observed by the monitor $M$, there by evaluating:
$s_i(j) = e_i(s_1, \cdots, s_n, t_1, \cdots, t_m)$
and rewriting $s_i(j)$ to $\LS$. 

\noindent{\bf Case 2 (Completely Unobserved). } Formally, we present a completely unobserved stream 
expression, $s_i$ from the dependency graph, $G = \langle V,E \rangle$, as for all 
$s_k$ (resp. $t_k$), 
$\langle s_i, s_k, w \rangle \in E$ (resp. $\langle s_i, t_k, w \rangle \in E$), 
$s_k(j+w) = \natural$ (resp. $t_k(j+w) = \natural$) are unobserved, for time instance $j$ . This signifies that the valuation of 
neither variables are known to the monitor $M$. Thus, we rewrite the following stream expressions
\begin{align*}
s_k'(j) &=
\begin{cases}
s_k(j+w) & 0 \leq j+w \leq N \\
\mathtt{default} & \text{ otherwise}
\end{cases} \\
t_k'(j) &=
\begin{cases}
t_k(j+w) & 0 \leq j+w \leq N \\
\mathtt{default} & \text{ otherwise}
\end{cases}
\end{align*}
for all $\langle s_i, s_k, w \rangle \in E$ and $\langle s_i, t_k, w \rangle \in E$, and include the rewritten associated equation for evaluating $s_i(j)$ as
$$s_i(j) = e_i(s_1', \cdots, s_n', t_1', \cdots, t_m')$$
It is to be noted that the $\mathtt{default}$ value of a stream variable, $s_k$ (resp. $t_k$), 
depends on the corresponding type $\type_k$ (resp. $\type_{m+k}$) of the stream.

\noindent{\bf Case 3 (Partially Observed). } Formally, we present a partially observed stream 
expression, $s_i$ from the dependency graph, $G = \langle V,E \rangle$, as for all 
$s_k$ (resp. $t_k$), they are either observed or unobserved, for time instance $j$. In other words, we can represent 
a set $\mathbb{V}_o = \{s_k \mid \exists s_k(j+w) \neq \natural \}$ 
of all observed stream variable and a set 
$\mathbb{V}_u = \{s_k \mid s_k(j+w) = \natural \}$ of 
all unobserved dependent stream variable for all $\langle s_i, s_k, w \rangle \in E$. The set 
can be expanded to include independent variables as well.
For all $s_k \in \mathbb{V}_u$ (resp. $t_k \in \mathbb{V}_u$) that are unobserved, are replaced by:
\begin{align*}
s_k^u(j) &=
\begin{cases}
s_k(j+w) & 0 \leq j+w \leq N \\
\mathtt{default} & \text{ otherwise}
\end{cases} \\
t_k^u(j) &=
\begin{cases}
t_k(j+w) & 0 \leq j+w \leq N \\
\mathtt{default} & \text{ otherwise}
\end{cases}
\end{align*}
and for all $s_k \in \mathbb{V}_o$ (resp. $t_k \in \mathbb{V}_o$) that are observed, 
are replaced by:
\begin{align*}
s_k^o(j+w) &= \mathtt{value} \\
t_k^o(j+w) &= \mathtt{value}
\end{align*}
and there by partially evaluating $s_i(j)$ as
$$s_i(j) = e_i(s^o_1, \cdots, s^o_n, t^o_1, \cdots, t^o_m, s^u_1, \cdots, s^u_n, t^u_1, \cdots, t^u_m)$$
followed by adding the partially evaluated associated equation for $s_i(j)$ to $\LS$. It is to be noted, that a
consistent partial distributed stream makes sure that for all $s_k$ (resp. $t_k$), 
can only be either observed or unobserved and not both or neither.

\begin{example}
\label{example:prog}
Consider the \lola specification mentioned below and the stream input of length $N = 6$ 
divided into two evaluation rounds and $\epsilon = 2$ as shown in 
Fig.~\ref{fig:progExample} with the monitors $M_1$ and $M_2$. 

\begin{lstlisting}
input a : uint
input b : uint
output c := ite(a[-1,0] <= b[1, 0], a[1,0], b[-1, 0])
\end{lstlisting}

The associated equation for the output stream is:
\begin{equation}
\nonumber
c = 
\begin{cases}
\mathtt{ite}(0 \leq b(i+1), a(i+1), 0) & i = 1 \\
\mathtt{ite}(a(i-1) \leq b(i+1), a(i+1), \\
~~~~~~~~~~b(i-1)) & 2 \leq i \leq N-1 \\
\mathtt{ite}(a(i-1) \leq 0, 0, b(i-1)) & i = N
\end{cases}
\end{equation}
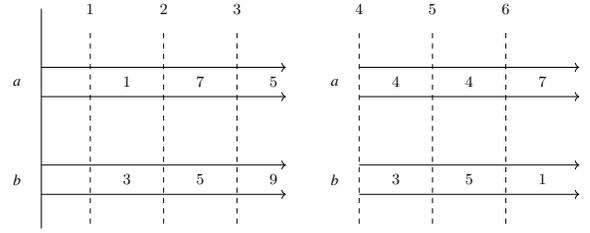
\begin{figure}
    \centering 
    \scalebox{0.65}{
        \begin{tikzpicture}
        
            \tikzstyle{every node}=[font=\small]

            \draw [-] (0,4.5) -- (0, 0);
            
            \draw (-0.5,3) node[] {\textit{a}};
            \draw (-0.5,1) node[] {\textit{b}};
            
            \draw [->] (0,3.3) -- (5,3.3);
            \draw [->] (0,2.7) -- (5,2.7);
            \draw [->] (0,1.3) -- (5,1.3);
            \draw [->] (0,0.7) -- (5,0.7);

            \draw (6,3) node[] {\textit{a}};
            \draw (6,1) node[] {\textit{b}};
            
            \draw [->] (6.5,3.3) -- (11,3.3);
            \draw [->] (6.5,2.7) -- (11,2.7);
            \draw [->] (6.5,1.3) -- (11,1.3);
            \draw [->] (6.5,0.7) -- (11,0.7);

            \draw[dashed] (1, 4) -- (1, 0);
            \draw (1, 4.5) node[] {$1$};
            \draw[dashed] (2.5, 4) -- (2.5, 0);
            \draw (2.5, 4.5) node[] {$2$};
            \draw[dashed] (4, 4) -- (4, 0);
            \draw (4, 4.5) node[] {$3$};

            \draw[dashed] (6.5, 4) -- (6.5, 0);
            \draw (6.5, 4.5) node[] {$4$};
            \draw[dashed] (8, 4) -- (8, 0);
            \draw (8, 4.5) node[] {$5$};
            \draw[dashed] (9.5, 4) -- (9.5, 0);
            \draw (9.5, 4.5) node[] {$6$};

            \draw (1.75, 3) node[] {$1$};
            \draw (3.25, 3) node[] {$7$};
            \draw (4.75, 3) node[] {$5$};

            \draw (7.25, 3) node[] {$4$};
            \draw (8.75, 3) node[] {$4$};
            \draw (10.25, 3) node[] {$7$};

            \draw (1.75, 1) node[] {$3$};
            \draw (3.25, 1) node[] {$5$};
            \draw (4.75, 1) node[] {$9$};

            \draw (7.25, 1) node[] {$3$};
            \draw (8.75, 1) node[] {$5$};
            \draw (10.25, 1) node[] {$1$};

        \end{tikzpicture}
    }
	\caption{Example of generating $\LS$}
	\label{fig:progExample}
	\vspace{-5mm}
\end{figure}
Let the partial distributed stream read by monitor $M_1$ include $\{a, (1, 1), (3, 5)\}, \{b, (2, 5), (3, 9)\}$ and the partial distributed stream read by monitor $M_2$ include $\{a, (1, 1), (2, 7)\}, \{b, (1, 3), (3, 9)$. Monitor $M_1$ evaluates $c(2) = 5$ and partially evaluates $c(1)$ and $c(3)$. Thus $\LS^1_1 = \{c(1) = a(2), c(2) = 5, c(3) = \ite(a(2) \leq b(4), a(4), 5) \}$. Monitor $M_2$ partially evaluates all $c(1)$, $c(2)$ and $c(3)$ and thus $\LS^2_1 = \{c(1) = \ite(0 \leq b(2), a(2), 0), c(2) = a(3), c(3) = \ite(7 \leq b(4), a(4), b(2))\}$.

Let the partial distributed stream read by monitor $M_1$ include $\{a, (4, 4), (5, 4)\}, \{b, (4, 3), (6, 1)\}$ and the partial distributed stream read by monitor $M_2$ include $\{a, (5, 4), (6, 7)\}, \{b, (4, 3), (5, 5)\}$. Monitor $M_1$ evaluates $c(4) = 9$ and $c(5) = 3$ and partially evaluates $c(6)$. Thus $\LS^1_2 = \{c(4) = 9, c(5) = 3, c(6) = b(5) \}$. Monitor $M_2$ evaluates $c(6) = 5$ and partially evalues $c(4)$ and $c(5)$ and thus $\LS^2_2 = \{c(4) = \ite(a(3) \leq 5, 4, 9), c(5) = \ite(a(4) \leq b(6), 7, 3), c(6) = 5 \}$.

It is to be noted, the after the first round of evaluation, the corresponding local states, $\LS^1_1$ and $\LS^2_1$ will be shared which will enable evaluating the output stream for few of the partially evaluated output stream (will be discussed in Section~\ref{subsec:union}). These will be included in the local state of the following evaluation round.

\end{example}

Note that generating $\LS$ takes into consideration an ordered stream. One where 
the time of occurrence of events and values are comparable. It can be imagined that generating 
the same for 
the distributed system involves generating it for all possible ordering of events. This 
will be discussed in details in the following sections.s.
\section{SMT-based Solution}
\label{sec:smt}

\subsection{SMT Entities}

SMT entities represent (1) \lola equations, and (2) variables used to 
represent the distributed stream. 
Once we have generated a sequence of consistent cuts, we use the laws 
discussed in Section~\ref{sec:progress}, to construct the set of all locally computer or 
partially computed \lola equations. 

\vspace{2mm}
\noindent \textbf{Distributed Stream.}
In our SMT encoding, the set of events, $\Events$, is represented by a bit vector, where each bit 
corresponds to an individual event in the distributed stream, $(\Events, \hb)$. The length 
of the stream under observation is $k$, which makes $|\Events| = k \times |\allStream|$ and 
the length of the entire stream is $N$. We 
conduct a pre-processing of the distributed stream where we create a $\Events \times \Events$ 
matrix, \texttt{hbSet} to incorporate the happen-before relations. We populate \texttt{hbSet} 
as \texttt{hbSet}[e][f] = 1 iff $e \hb f$, else \texttt{hbSet}[e][f] = 0. In order to map each 
event to its respective stream, we introduce a function, $\mu: \Events \rightarrow \allStream$.{\tiny {\tiny {\tiny {\tiny {\tiny {\tiny }}}}}}

We introduce a valuation function, $\val: \Events \rightarrow \type$ (whatever the type is in the \lola specification), in order to represent 
the values of the individual events. Due to the partially synchronous 
assumption of the system, the possible time of occurrence of an event is defined by a function 
$\delta: \Events \rightarrow \wholePlusSet$, where
$\forall \stream(\RTime) \in \Events. \exists \RTime' \in [\max\{0, \RTime - \epsilon + 1\}, \min\{\RTime + \epsilon - 1\}, N]. \delta\big(\stream(\RTime)\big) = \RTime'$.
We update the $\delta$ function when referring to events on output streams by updating the time 
synchronization constant to $\epsilon_M$. This accounts for the clock skew between two monitors. 
Finally, we introduce an 
uninterpreted function $\rho: \wholePlusSet \rightarrow 2^\Events$ that identifies a sequence of 
consistent cuts for computing all possible evaluations of the \lola 
specification, while satisfying a number of given constrains explained in 
Section~\ref{subsec:smt_const}.

\subsection{SMT Constrains}
\label{subsec:smt_const}

Once we have defined the necessary SMT entities, we move onto the SMT constraints. We first define the 
SMT constraints for generating a sequence of consistent cuts, followed by the ones for evaluating 
the given \lola equations $\varphi_\stream$.

\textbf{Constrains for consistent cuts over $\boldsymbol{\rho}$:} In order to make sure that the uninterpreted function $\rho$ identifies a sequence of consistent cuts, we enforce certain constraints. The first constraint enforces that each element in the range of $\rho$ is in fact a consistent cut:
$$\forall i \in [0, k]. \forall e, e' \in \Events. \Big( (e \hb e') \land (e' \in \rho(i)) \Big) \rightarrow (e \in \rho(i))$$
Next, we enforce that each successive consistent cut consists of all events included in the previous consistent cut:
$$\forall i \in [0, k-1]. \rho(i) \subseteq \rho(i+1)$$
Next, we make sure that the front of each consistent cut constitutes of events with possible time of occurrence in accordance with the semantics of partially-synchronous \lola:
$$\forall i \in [0, k]. \forall e \in \front(\rho(i)). \delta(e) = i$$
%
%
Finally, we make sure that every consistent cut consists of events from all streams:
$$\forall i \in [0, k]. \forall \stream \in \allStream. \exists e \in \front(\rho(i)). \mu(e) = \stream$$
\textbf{Constrains for \lola specification:} These constraints will evaluate the \lola 
specifications and will make sure that $\rho$ will not only represent a valid sequence of 
consistent cuts but also make sure that the sequence of consistent cuts evaluate the \lola equations, 
given the stream expressions. As is evident that a distributed system can often evaluate to multiple 
values at each instance of time. Thus, we would need to check for both satisfaction and violation 
for logical expressions and evaluate all possible values for arithmetic expressions. Note that 
monitoring all \lola specification can be reduce to evaluating expressions that are either logical 
or arithmetic. Below, we mention the SMT constraint for evaluating different \lola equations at time 
instance $j$:
\begin{align*}
t_i[p, c] &=
\begin{cases}
 \val(e) & 0 \leq j + p \leq N \\
 c & \text{ otherwise}
\end{cases} \\
&~~~~~~\Big(\exists e \in \front(\rho(j+p)). (\mu(e) = \stream_i)\Big) \\
s_i(j) &= \tru ~~ \front(\rho(j)) \models \varphi_\stream \\ 
&~~~~~~\text{(Logical expression, satisfaction)} \\
s_i(j) &= e_i(\forall e \in \front(\rho(j)). \val(e)) \\ 
&~~~~~~\text{(Arithmetic expression, evaluation)} \\
\end{align*}
The previously evaluated result is included in the SMT instance as a entity and a additional constrain is added that only evaluates to unique value, in order to generate all possible evaluations. The SMT instance returns a satisfiable result iff there exists at-least one unique evaluation of the equation. This is repeated multiple times until we are unable to generate a sequence of consistent cut, given the constraints, i.e., generate unique values. It is to be noted that stream expression of the form \texttt{ite}$(s_i, s_k, s_j)$ can be reduced to a set of expressions where we first evaluate $s_i$ as a logical expression followed by evaluating $s_j$ and $s_k$ accordingly.
\section{Runtime Verification of \lola specifications}
\label{sec:mon}

Now that both the rules of generating rewritten \lola equations (Section~\ref{sec:progress}) and 
the working of the 
SMT encoding (Section~\ref{sec:smt}) have been discussed, we can finally bring them together 
in order to solve the problem introduced in Section~\ref{sec:problem}.

\subsection{Computing $\LC$}
\label{subsec:union}

Given a set of local states computed from the SMT encoding, each monitor process receives a set of rewritten \lola associated equations, denoted by $\LS^i_j$, where $i \in [1, |\Monitors|]$ for $j$-th computation round. Our idea to compute $\LC$ from these sets is to simply take a prioritized union of all the associated equations.
$$\LC(\Pi^i_j) = \biguplus_{i \in [1, |\Monitors|]} \LS^i_j$$
The intuition behind the priority is that an evaluated \lola equation will take precedence over a 
partially evaluated/unevaluated \lola equation, and two partially-evaluated \lola equation 
will be combined to form a evaluated or partially evaluated \lola equation. For example, taking 
the locally computed $\LS^1_1$ 
and $\LS^2_1$ from Example~\ref{example:prog}, $\LC(\LS^1_1, \LS^2_1)$ is computed to be 
$\{c(1) = a(2), c(2) = 5, c(3) = \ite(7 \leq b(4), a(4), 5)\}$ at Monitor $M_1$ and 
$\{c(1) = 7, c(2) = 5, c(3) = \ite(7 \leq b(4), a(4), 5)\}$ at Monitor $M_2$. Subsequently, 
$\LC(\LS^1_2, \LS^2_2)$ is computed to be $\{c(4) = 9, c(5) = 3, c(6) = 5\}$ at Monitor $M_1$ 
and $\{c(4) = 9, c(5) = 3, c(6) = 5\}$ at Monitor $M_2$.

\subsection{Bringing it all Together}
\label{subsec:final}

As stated in Section~\ref{subsec:overallProb}, the monitors are decentralized and online. Since, 
setting up of a SMT instance is costly (as seen in our evaluated results in 
Section~\ref{sec:eval}), we often find it more efficient to evaluate the \lola specification after 
every $k$ time instance. This reduces the number of computation rounds to $\lceil N/k \rceil$ as 
well as the number of messages being transmitted over the network as well with an increase to the 
size of the messages. We update Algorithm~\ref{alg:monitor} to reflect our solution more closely 
to Algorithm~\ref{alg:monitorUpdt}.

\begin{algorithm}[t]
	\caption{Computation on Monitor $M_i$}\label{alg:monitorUpdt}
	\begin{algorithmic}[1]
		\footnotesize
		\State $\LS^i_1[0] = \emptyset$
		\For{$r = 1$ to $\lceil N/k \rceil$}
			\State $(\Events_i, \hb_i)_r \leftarrow$ $r$-th Consistent partial distributed stream
			\State $j = 0$
			\Do
			\State $j = j+1$
				\State $(\stream_1, \stream_2, \cdots, \stream_{|\allStream|}) \in \Sr(\Events_i, \hb_i)$
				\State $\LS^i_r[j] \leftarrow \LS^i_{r}[j-1] \cup \big[ (\stream_1, \stream_2, \cdots, \stream_{|\allStream|}) \models_S \varphi_\stream \big]$
			\doWhile{$(\LS^i_r[j] \neq \LS^i_r[j-1])$}
			\State {\bf Send: } broadcasts symbolic view $\LS^i_r[j]$
			\State {\bf Receive: } $\Pi^i_r \leftarrow \{\LS^k_r \mid 1 \leq k \leq \Monitors\}$
			\State {\bf Compute: } $\LS^i_{r+1}[0] \leftarrow \mathit{LC}(\Pi^i_r)$ \Comment{Section~\ref{subsec:union}}
		\EndFor
		\State $\Result^i \leftarrow \bigcup_{r \in [1, \lceil N/k \rceil + 1]} \LS^i_r[0]$
	\end{algorithmic}

\end{algorithm}

Each evaluation round starts by reading the $r$-th partial distributed system which consists of 
events occurring between the time $\max\{0, (r-1)\times\lceil N/k \rceil\}$ and 
$\min\{N, r\times\lceil N/k \rceil\}$ (line 3). We assume that the partial distributed system is 
consistent in accordance with the assumption that each event has been read by atleast one 
monitor. To account for any concurrency among the events in $(r-1)$-th computation 
round with that in the $r$-th computation round, we expand the length by  
$\epsilon$ time, there-by making the length of the $r$-th computation round, 
$\max\{0, (r-1)\times\lceil N/k \rceil - \epsilon + 1\}$ and $\min\{N, r\times\lceil N/k \rceil\}$.

Next, we reduce the evaluation of the distributed stream problem into an SMT problem (line 7). 
We represent the distributed system using SMT entities and then by the help of SMT constraints, and we 
evaluate the \lola specification on the generated sequence of consistent cuts. Each sequence of 
consistent cut presents a unique ordering of the events which evaluates to a unique value for the 
stream expression (line 8). This is repeated until we no longer can generate a sequence of 
consistent cut that evaluates $\varphi_\stream$ to unique values (line 9). Both the evaluated 
as well as partially evaluated results are included in $\LS$ as associated 
\lola equations. This is followed by the communication phase where each monitor shares 
its locally computed $\LS^i_r$, for all $i \in [1, |\Monitors|]$ and $r$ evaluation round (line 10-11).

Once, the local states of all the monitors are received, we take a prioritized union of all the associated equation and include them into $\LS^i_{r+1}$ set of associated equations (line 12). Following this, the computation shifts to next computation round and the above mentioned steps repeat again. Once we reach the end of the computation, all the evaluated values are contained in $\Result^i$

\begin{lemma}\label{thm:terminate}
Let $\allStream = \{S_1, S_2, \cdots, S_n \}$ be a distributed system and $\varphi$ be an \lola 
specification. Algorithm~\ref{alg:monitor} terminates when monitoring a terminating distributed 
system.
\end{lemma}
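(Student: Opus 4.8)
The plan is to reduce termination to two independent finiteness facts: the outer loops of Algorithm~\ref{alg:monitor} (and its generalization Algorithm~\ref{alg:monitorUpdt}) are bounded \emph{a priori}, so the only nontrivial source of nontermination is the inner accumulation loop that repeatedly invokes the SMT solver to enumerate distinct interleaving evaluations. I would therefore first dispatch the easy part and then concentrate on showing that this inner loop reaches a fixed point in finitely many steps.

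For the easy part: since the monitored system is terminating, every stream $\stream_i \in \allStream$ has finite length $N+1$, so the \textbf{for} loop over $j \in [0, N]$ (respectively the outer \textbf{for} loop over $r \in [1, \lceil N/k\rceil]$) executes only finitely many times. Within each iteration, the \textbf{Send}/\textbf{Receive}/\textbf{Compute} phase handles finitely many messages from the $|\Monitors|$ monitors, and the partial-synchrony assumption on the monitor clocks (clock skew $\epsilon_M$) guarantees that every broadcast arrives within $\epsilon_M$; hence the communication phase is finite. It remains to show each iteration's inner \textbf{do}--\textbf{while} loop (lines 5--9 of Algorithm~\ref{alg:monitorUpdt}) terminates.

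For the crux, I would argue by \emph{monotonicity over a finite domain}. Because $N$ is finite, the event set $\Events$ is finite with $|\Events| = (N+1)\cdot|\allStream|$. Under partial synchrony the admissible time of occurrence of each event lies in a window of width $2\epsilon-1$, so the $\delta$-consistent reorderings are finite in number; consequently the set of valid consistent-cut sequences $\ccAll$ is finite, and hence so is the interleaving set $\Sr(\Events, \hb)$. Each interleaving yields one synchronous evaluation, so the collection of possible results $\big[(\Events, \hb) \models_{PS} \varphi\big]$ is a \emph{finite} set --- crucially, this holds even for unbounded integer types, since the number of distinct values is bounded above by $|\Sr(\Events, \hb)|$. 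Now each pass of the loop sets $\LS^i_r[j] \leftarrow \LS^i_r[j-1] \cup [\cdots]$, so the accumulated set is monotonically non-decreasing; the added SMT constraint forcing a previously-unseen value means that every pass either strictly enlarges $\LS^i_r[j]$ or returns \textsf{unsat}, in which case $\LS^i_r[j] = \LS^i_r[j-1]$ and the loop guard fails. A strictly increasing chain inside a finite set has length at most $|\Sr(\Events, \hb)|$, so the fixed point is reached after finitely many passes and the loop exits.

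I expect the main obstacle to be formalizing the correspondence between ``the SMT instance is unsatisfiable'' and ``no new unique evaluation exists,'' i.e., arguing that the uniqueness constraint together with the consistent-cut constraints of Section~\ref{subsec:smt_const} exactly enumerate $\Sr(\Events, \hb)$ without missing or double-counting interleavings. The finiteness counting is routine once this enumeration completeness is pinned down, so I would spend most of the proof verifying that the satisfiability of the SMT encoding precisely tracks the set of remaining unexplored interleavings.
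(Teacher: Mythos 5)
Your argument is correct in substance but takes a genuinely different route from the paper's. The paper's proof is operational and much more informal: it assumes the terminating system emits a \emph{stop} signal, and then argues that each monitor can finish evaluating its terminal associated equations because any value that was never observed (e.g., a positive offset referencing beyond the end of the stream) is replaced by the declared default, so no monitor blocks forever waiting for data. It says nothing about the enumeration of interleavings. You instead attack the loop structure directly: the outer iteration is bounded by the finite stream length, the communication phase is bounded by the $\epsilon_M$ delivery guarantee, and the inner \textbf{do}--\textbf{while} terminates because $\Sr(\Events,\hb)$ is finite (finitely many events, each with an occurrence window of width $2\epsilon-1$) and the accumulated set $\LS^i_r[j]$ grows strictly on every satisfiable pass, so a monotone chain in a finite set must stabilize. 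Your version buys a rigorous bound on the one loop whose termination is actually non-obvious in the pseudocode, and your observation that the number of distinct evaluations is bounded by $|\Sr(\Events,\hb)|$ rather than by the (possibly infinite) value domain is exactly the point the paper leaves implicit; your honest flag about needing to verify that SMT unsatisfiability coincides with exhaustion of the interleavings is the right remaining obligation. What the paper's proof covers that yours elides is the default-substitution step ensuring that equations referencing unobservable positions are eventually discharged rather than carried forward indefinitely --- though as you note, in the pseudocode as written this is more a matter of the result being fully evaluated (correctness) than of the loops terminating.
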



\begin{theorem}\label{thm:corr-sound}
Algorithm~\ref{alg:monitorUpdt} solves the problem stated in Section~\ref{sec:problem}.
\end{theorem}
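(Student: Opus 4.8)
The plan is to establish that Algorithm~\ref{alg:monitorUpdt} is both \emph{sound} and \emph{complete} with respect to the problem statement, i.e., that upon termination every monitor $M_i$ computes $\Result^i = \big[(\Events,\hb)\models_{PS}\varphi\big]$. Since termination is already guaranteed by Lemma~\ref{thm:terminate}, the remaining task is set equality, which I would prove by double inclusion after first reducing the partially synchronous evaluation to the synchronous one. The key reduction, supplied earlier in the excerpt, is that $\big[(\Events,\hb)\models_{PS}\varphi\big]=\{(\stream_1,\dots,\stream_n)\models_S\varphi \mid (\stream_1,\dots,\stream_n)\in\Sr(\Events,\hb)\}$; so the theorem amounts to showing that the distributed computation enumerates exactly the synchronous evaluations arising from every valid sequence of consistent cuts, no more and no fewer.

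First I would argue \textbf{soundness} ($\Result^i \subseteq [(\Events,\hb)\models_{PS}\varphi]$): every value that ends up in $\LS^i_r[j]$ comes either from a locally evaluated associated equation or from the prioritized union $\LC$ of partially evaluated equations contributed by other monitors. By the consistency condition of Definition~\ref{def:view_const} (each event is observed by at least one monitor, and no two monitors disagree on an observed value), each fully resolved equation corresponds to substituting the \emph{true} observed values along some legal choice of event times within the skew window $[\max\{0,j-\epsilon+1\},\min\{N,j+\epsilon-1\}]$. I would show by structural induction on the \lola stream expressions (using the inductive cases of Definition~\ref{def:psync-eval-model}) that any such fully resolved value equals $\val(e_i)(k)$ for some admissible $k$, hence lies in the $\models_{PS}$ set. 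The SMT constraints over $\rho$ guarantee the chosen times form a genuine sequence of consistent cuts, so the resolved stream is a member of $\Sr(\Events,\hb)$.

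Next I would argue \textbf{completeness} ($[(\Events,\hb)\models_{PS}\varphi]\subseteq\Result^i$): take any synchronous stream $(\stream_1,\dots,\stream_n)\in\Sr(\Events,\hb)$ and its synchronous verdict $v$. I must show $v\in\Result^i$ for every $i$. The argument proceeds in two layers. The \emph{local} layer uses that the SMT encoding of Section~\ref{sec:smt} enumerates \emph{all} sequences of consistent cuts (the do--while loop at lines 5--9 repeats until no new unique value can be generated), so the interleaving realizing $v$ is produced by some monitor up to the events it has observed, yielding a possibly partial equation. The \emph{global} layer uses the rewriting rules of Section~\ref{sec:progress}: any variable left as $\natural$ in $M_i$'s view is, by the first clause of Definition~\ref{def:view_const}, observed by some other monitor $M_{i'}$, whose $\LS^{i'}_r$ carries the resolving value; the prioritized union $\biguplus$ combines them so the equation becomes fully evaluated after the receive/compute phase (lines 10--12). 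I would make this precise by induction on the dependency-graph depth of $s_i$ and on the round index $r$, showing that each round strictly resolves at least the equations whose dependencies are now available, and that the $\epsilon$-overlap between successive rounds (noted in Section~\ref{subsec:final}) prevents boundary events from being lost.

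The main obstacle, and where I would spend most of the effort, is the completeness argument across rounds: I must show that splitting the stream into blocks of length $k$ with an $\epsilon$-overlap does not drop or spuriously create interleavings at the block boundaries, so that the union $\bigcup_{r}\LS^i_r[0]$ over all rounds reconstructs exactly the global set. This requires a careful invariant stating that after round $r$ every associated equation for time instances up to $\min\{N, r\times\lceil N/k\rceil\}$ whose dependency window lies entirely within processed events is fully resolved and equals the corresponding global value, while equations touching later instances remain as consistently-rewritten partial equations carried forward in $\LS^i_{r+1}[0]$. Establishing that this invariant is preserved by the $\LC$ step---in particular that the prioritized union is associative and commutative enough to make every monitor converge to the \emph{same} resolved set---is the crux, since the problem statement demands agreement across \emph{all} monitors, not merely correctness at one.
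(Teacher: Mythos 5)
Your plan is correct and follows essentially the same route as the paper's own proof: the paper's three steps (showing $\Sr(\Events,\hb)=\Sr(\Events_1.\Events_2\cdots\Events_{N/k},\hb)$ so that splitting into rounds loses no interleavings, a case analysis on observed/unobserved/$\natural$ values showing every equation is eventually resolved after communication, and reliable all-to-all messaging giving agreement across monitors) correspond exactly to your round-boundary invariant, your global completeness layer, and your convergence-of-$\LC$ argument. Your additional soundness direction via structural induction on stream expressions is not spelled out in the paper but is consistent with, and if anything slightly more careful than, its argument.
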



\begin{theorem}\label{thm:msgComplex}
Let $\varphi$ be a \lola specification and $(\Events, \hb)$ be a distributed 
stream consisting of $|\allStream|$ streams. The message complexity of Algorithm~\ref{alg:monitorUpdt} with $|\Monitors|$ monitors is
$$O\big(\epsilon^{|\allStream|} N |\Monitors|^2\big) \;\;\; \Omega(N |\Monitors|^2)$$
\end{theorem}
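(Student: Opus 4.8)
The plan is to read ``message complexity'' as the total volume of communication (the aggregate size of all associated-equation payloads transmitted across all links), since the $\epsilon^{|\allStream|}$ factor can only arise from message \emph{size}, not message \emph{count}. I would then decompose this total into three independent factors---the number of communication rounds, the number of point-to-point messages exchanged per round, and the worst-case size of a single message---bound each factor separately, and multiply. The matching lower bound follows by substituting the best-case analogue of each factor.

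First I would count rounds and messages. By the reformulation in Algorithm~\ref{alg:monitorUpdt}, the outer loop runs for $\lceil N/k\rceil$ rounds, and the per-round batch spans $O(k)$ time instances, so the product (rounds)$\times$(batch size) is $\Theta(N)$; the $\epsilon$-overlap between consecutive batches only adds a lower-order term. Within one round, each of the $|\Monitors|$ monitors performs a single \textbf{Send} that broadcasts its local state $\LS^i_r[j]$ to all others, so over reliable point-to-point links this is $|\Monitors|(|\Monitors|-1)=O(|\Monitors|^2)$ messages per round, and symmetrically $\Omega(|\Monitors|^2)$, because soundness (Theorem~\ref{thm:corr-sound}) forces every monitor's partial observation to reach every other monitor before $\LC$ is computed.

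Next I would bound the size of one message. The payload $\LS^i_r[j]$ holds one group of associated equations per time instance in the batch, hence $O(k)$ groups, and the size of each group is the number of distinct evaluated values a single output position $\stream_i(j)$ can carry. This is exactly where the $\epsilon^{|\allStream|}$ factor enters, and I expect it to be the main obstacle. By the partially synchronous semantics the value set $\stream_i(j)$ is indexed by the choice of frontier $\front(\cc_j)\in\Sr(\Events,\hb)$, and for each of the $|\allStream|$ streams the event pinned to global time $j$ may be any event whose local timestamp lies in the window $[\,j-\epsilon+1,\,j+\epsilon-1\,]$, i.e.\ at most $2\epsilon-1$ choices. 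Since the streams contribute independently, the number of distinct frontiers---hence the number of values in $\stream_i(j)$---is at most $(2\epsilon-1)^{|\allStream|}=O(\epsilon^{|\allStream|})$, so one message has size $O(k\,\epsilon^{|\allStream|})$.

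Finally I would multiply. For the upper bound, $\lceil N/k\rceil\cdot O(|\Monitors|^2)\cdot O(k\,\epsilon^{|\allStream|})=O(\epsilon^{|\allStream|}\,N\,|\Monitors|^2)$, the $k$ cancelling cleanly. For the lower bound I would specialize to a near-synchronous instance where $\epsilon$ is constant: each of the $\lceil N/k\rceil$ rounds still forces $\Omega(|\Monitors|^2)$ messages, each carrying at least one equation per time instance in its batch, i.e.\ $\Omega(k)$, giving $\lceil N/k\rceil\cdot\Omega(|\Monitors|^2)\cdot\Omega(k)=\Omega(N\,|\Monitors|^2)$. The delicate step throughout is justifying the $(2\epsilon-1)^{|\allStream|}$ count as a genuine worst case rather than a loose over-estimate: I would exhibit a specification and a distributed stream (e.g.\ an arithmetic combination of all $|\allStream|$ inputs, as in Fig.~\ref{fig:intro1}) in which every combination of in-window event positions yields a distinct value, confirming the bound is tight and that no cancellation collapses the product below $\epsilon^{|\allStream|}$.
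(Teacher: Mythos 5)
Your proposal is correct and follows essentially the same route as the paper's own proof: a decomposition into $\lceil N/k\rceil$ rounds, $|\Monitors|^2$ point-to-point messages per communication phase, and a per-round contribution of $O(\epsilon^{|\allStream|})$ coming from the number of unique evaluations induced by the interleavings, multiplied together for the upper bound and collapsed to $\Omega(1)$ evaluations per position for the lower bound. Your version is in fact slightly more careful than the paper's --- you make explicit that $\epsilon^{|\allStream|}$ enters through message \emph{size} rather than message \emph{count}, that the $k$ factors cancel, and that the $(2\epsilon-1)^{|\allStream|}$ bound is attained --- but these are refinements of the same argument, not a different one.
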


\section{Case Study and Evaluation}
\label{sec:eval}

In this section, we analyze our SMT-based decentralized monitoring solution. We note that we are 
not concerned about data collections, data transfer, etc, as given a distributed setting, the 
runtime of the actual SMT encoding will be the most dominating aspect of the monitoring process. 
We evaluate our proposed solution using traces collected from synthetic experiments 
(Section~\ref{subsec:synthetic}) and case studies involving several industrial control 
systems and RACE dataset (Section~\ref{subsec:ics}). The implementation of our approach can be 
found on Google Drive(\url{https://tinyurl.com/2p6ddjnr}).

\subsection{Synthetic Experiments}
\label{subsec:synthetic}

\subsubsection{Setup}
Each experiment consists of two stages: (1) generation of the distributed stream and (2) 
verification. For data generation, we develop a synthetic program that randomly generates a 
distributed stream (i.e., the state of the local computation for a set of streams). We 
assume that streams are of the type \texttt{Float}, \texttt{Integer} or \texttt{Boolean}. For 
the streams of the type \texttt{Float} and \texttt{Integer}, the initial value is a random 
value \texttt{s[0]} and we generate the subsequent values by \texttt{s[i-1] + N(0, 2)}, for 
all $i \geq 1$. We also make sure that the value of a stream is always non-negative. On the 
other hand, for streams of the type \texttt{Boolean}, we start with either $\tru$ or $\fals$ and 
then for the subsequent values, we stay at the same value or alter using a Bernoulli 
distribution of $B(0.8)$, where a $\tru$ signifies the same value and a $\fals$ denotes a change in value.

For the monitor, we study the approach using Bernoulli distribution $B(0.2)$, $B(0.5)$ and 
$B(0.8)$ as the read distribution of the events. A higher readability offers each event to be read by 
higher number of monitors. We also make sure that each event is read by at least one monitor in 
accordance with the proposed approach. To test the approach with respect to different types of 
stream expression, we use the following arithmetic and logical expressions.

\begin{lstlisting}
input a1 : uint
input a2 : uint
output arithExp := a1 + a2
output logicExp := (a1 > 2) && (a2 < 8)
\end{lstlisting}

\subsubsection{Result - Analysis}

We study different parameters and analyze how it effects the runtime and the message size in our approach. All experiments were conducted on a 2017 MacBook Pro with 3.5GHz Dual-Core Intel core i7 processor and 16GB, 2133 MHz LPDDR3 RAM. Unless specified otherwise all experiments consider number of streams, $|\allStream| = 3$, time synchronization constant, $\epsilon_M = \epsilon = 3s$, number of monitors same as the number of streams, computation length, $N = 100$, with $k = 3$ with a read distribution $B(0.8)$.

\vspace{2mm}
\noindent{\bf Time Synchronization Constant.} Increasing the value of the time synchronization constant $\epsilon$, increases the possible number of concurrent events that needs to be considered. This increases the complexity of evaluating the \lola specification and there-by increasing the runtime of the algorithm. In addition to this, higher number of $\epsilon$ corresponds to higher number of possible streams that needs to be considered. We observe that the runtime increases exponentially with increasing the value of $\epsilon$ in Fig.~\ref{graph:runtime_epsilon}, as expected. An interesting observation is that with increasing the value of $k$, the runtime increases at a higher rate until it reaches the threshold where $k = \epsilon$. This is due to the fact, that the number of streams to be considered increases exponentially but ultimately gets bounded by the number of events present in the computation.

Increasing the value of the time synchronization constant is also directly proportional to the number of evaluated results at each instance of time. This is because, each stream corresponds to a unique value being evaluated until it gets bounded by the total number of possible evaluations, as can be seen in Fig.~\ref{graph:message_epsilon}. However, comparing Figs.~\ref{graph:runtime_epsilon} and \ref{graph:message_epsilon}, we see that the runtime increases at a faster rate to the size of the message. This owes to the fact that initially a SMT instance evaluates  unique values at all instance of time. However, as we start reaching all possible evaluations for certain instance of time, only a fraction of the total time instance evaluates to unique values. This is the reason behind the size of the message reaching its threshold faster than the runtime of the monitor.

\begin{figure*}[t]
  \centering
  \subcaptionbox{Epsilon\label{graph:runtime_epsilon}}
      {\scalebox{0.3}{
        \pgfplotsset{every tick label/.append style={font=\huge}}
    
    \pgfplotsset{every axis/.append style={
                       label style={font=\huge}}}
        \begin{tikzpicture}
            \begin{axis}[
                width=\textwidth,
                grid=both,
                minor tick num=2,
                xlabel={Time Synchronization Constant (sec.) $\varepsilon$},
                ylabel={Runtime (sec.)},
                legend style={nodes={scale=1, transform shape}, font=\Huge},
            y filter/.code=\pgfmathparse{#1 * 2.30258509299},
            ymode = log,
            log ticks with fixed point,
            ytick = {1, 5, 10, 50, 100, 500, 1000},
            y label style={font=\huge},
            x label style={font=\huge},
            xtick = {1, 2, 3, 4, 5},
            legend pos = north west
                ]
                \addplot[color=brown, mark=square*, ultra thin, mark options={solid, fill=brown, scale=1}] coordinates {
(1, 0.09677101653)
(2, 0.4596939765)
(3, 0.9872192299)
(4, 2.339053736)
(5, 3.157426545)
                };
                \addlegendentry{$k=5$}
                
                \addplot[color=black, mark=square*, ultra thin, mark options={solid, fill=red, scale=01}] coordinates {
(1, -0.351639989)
(2, 0.1461280357)
(3, 0.8463371121)
(4, 1.485721426)
(5, 1.618048097)
                };
                \addlegendentry{$k=4$}
                
                \addplot[color=blue, mark=square*, ultra thin, mark options={solid, fill=blue, scale=1}] coordinates {
(1, -0.2820799742)
(2, 0.1105897103)
(3, 0.4653828514)
(4, 0.5763413502)
(5, 0.6720978579)
                };
                \addlegendentry{$k=3$}
                
                \addplot[color=purple, mark=square*, ultra thin, mark options={solid, fill=purple, scale=1}] coordinates {
(1, -0.1260984021)
(2, 0.2410481507)
(3, 0.4300750556)
(4, 0.6663307443)
(5, 0.6386888867)
                };
                \addlegendentry{$k=2$}
                
                \addplot[color=cyan, mark=square*, ultra thin, mark options={solid, fill=cyan, scale=1}] coordinates {
(1, 0.2016701796)
(2, 0.42975228)
(3, 0.6396856612)
(4, 0.7696726641)
(5, 0.8422971343)
                };
                \addlegendentry{$k=1$}

            \end{axis}
        \end{tikzpicture}
    \hfill
    \subcaptionbox{Number of Streams\label{graph:runtime_numStream}}
      {\scalebox{0.3}{
        \pgfplotsset{every tick label/.append style={font=\huge}}
    
    \pgfplotsset{every axis/.append style={
                       label style={font=\huge}}}
        \begin{tikzpicture}
            \begin{axis}[
                width=\textwidth,
                grid=both,
                minor tick num=2,
                xlabel={Number of Streams $|\allStream|$},
                ylabel={Runtime (sec.)},
                legend style={nodes={scale=1, transform shape}, font=\Huge},
            y filter/.code=\pgfmathparse{#1 * 2.30258509299},
            ymode = log,
            log ticks with fixed point,
            ytick = {1, 5, 10, 50, 100, 500, 1000, 500, 10000, 50000},
            y label style={font=\huge},
            x label style={font=\huge},
            xtick = {2, 3, 4, 5, 7, 10},
            legend pos = north west
                ]
                \addplot[color=brown, mark=square*, ultra thin, mark options={solid, fill=brown, scale=1}] coordinates {
(2, -0.05453141487)
(3, 0.3304137733)
(4, 1.278753601)
(5, 2.26245109)
(7, 3.329397879)
(10, 4.46505552)
                };
                \addlegendentry{$k=5$}
                
                \addplot[color=black, mark=square*, ultra thin, mark options={solid, fill=red, scale=01}] coordinates {
(2, 0.1461280357)
(3, 0.3830969299)
(4, 1.1811287)
(5, 1.752048448)
(7,  2.685741739)
(10, 3.8213825)
                };
                \addlegendentry{$k=4$}
                
                \addplot[color=blue, mark=square*, ultra thin, mark options={solid, fill=blue, scale=1}] coordinates {
(2, 0.1105897103)
(3, 0.2893659515)
(4, 0.9712758487)
(5, 1.275219187)
(7, 2.141659815)
(10, 2.972764095)
                };
                \addlegendentry{$k=3$}
                
                \addplot[color=purple, mark=square*, ultra thin, mark options={solid, fill=purple, scale=1}] coordinates {
(2, 0.2410481507)
(3, 0.3184807252)
(4, 0.4656802116)
(5, 0.6421676344)
(7, 0.8688794462)
(10, 1.339053736)
                };
                \addlegendentry{$k=2$}
                
                \addplot[color=cyan, mark=square*, ultra thin, mark options={solid, fill=cyan, scale=1}] coordinates {
(2, 0.42975228)
(3, 0.4747988188)
(4, 0.4930395883)
(5, 0.8955330395)
(7, 1.040523226)
(10, 1.520352504)
                };
                \addlegendentry{$k=1$}

            \end{axis}
        \end{tikzpicture}
    \hfill
    \subcaptionbox{Different \lola Specification\label{graph:runtime_formula}}
      {\scalebox{0.3}{
        \pgfplotsset{every tick label/.append style={font=\huge}}
    
    \pgfplotsset{every axis/.append style={
                       label style={font=\huge}}}
        \begin{tikzpicture}
            \begin{axis}[
                width=\textwidth,
                grid=both,
                minor tick num=2,
                xlabel={Number of Streams $|\allStream|$},
                ylabel={Runtime (sec.)},
                legend style={nodes={scale=1, transform shape}, font=\Huge},
            y filter/.code=\pgfmathparse{#1 * 2.30258509299},
            ymode = log,
            log ticks with fixed point,
            ytick = {1, 5, 10, 50, 100, 500, 1000},
            y label style={font=\huge},
            x label style={font=\huge},
            xtick = {1, 2, 3, 4, 5, 7, 10},
            legend pos = north west
                ]
                \addplot[color=brown, mark=square*, ultra thin, mark options={solid, fill=brown, scale=1}] coordinates {
(2, 0.1105897103)
(3, 0.2893659515)
(4, 0.9712758487)
(5, 1.275219187)
(7, 2.141659815)
(10, 2.972764095)
                };
                \addlegendentry{\texttt{arithExp}, $B(0.8)$}
                
                \addplot[color=black, mark=square*, ultra thin, mark options={solid, fill=black, scale=01}] coordinates {
(2, 0.03925544381)
(3, 0.3394514413)
(4, 0.5467893516)
(5, 0.7382254481)
(7, 0.9415611202)
(10, 1.148849343)
                };
                \addlegendentry{\texttt{logicExp}, $B(0.8)$}

                \addplot[color=blue, mark=square*, ultra thin, mark options={solid, fill=blue, scale=1}] coordinates {
(2, 0.314709693)
(3, 0.6143698395)
(4, 1.341948646)
(5, 1.664199936)
(7, 2.537441283)
(10, 3.322777315)
                };
                \addlegendentry{\texttt{arithExp}, $B(0.5)$}
                
                \addplot[color=cyan, mark=square*, ultra thin, mark options={solid, fill=cyan, scale=01}] coordinates {
(2, 0.2432861461)
(3, 0.543571424)
(4, 1.270329397)
(5, 1.599337133)
(7, 2.502399805)
(10, 3.296344835)
                };
                \addlegendentry{\texttt{logicExp}, $B(0.5)$}

                \addplot[color=red, mark=square*, ultra thin, mark options={solid, fill=red, scale=1}] coordinates {
(2, 0.4996870826)
(3, 0.8317418336)
(4, 1.405517107)
(5, 1.815444916)
(7, 2.617377824)
(10, 3.424358863)
                };
                \addlegendentry{\texttt{arithExp}, $B(0.2)$}
                
                \addplot[color=green, mark=square*, ultra thin, mark options={solid, fill=green, scale=01}] coordinates {
(2, 0.5287110684)
(3, 0.8286598965)
(4, 1.381656483)
(5, 1.856668484)
(7, 2.63844932)
(10, 3.400771047)
                };
                \addlegendentry{\texttt{logicExp}, $B(0.2)$}

            \end{axis}
        \end{tikzpicture}
      
    \caption{Impact of different parameters on runtime for synthetic data.}

 \vspace{-2mm}
\end{figure*}

\begin{figure*}[t]
  \centering
  \subcaptionbox{Epsilon\label{graph:message_epsilon}}
      {\scalebox{0.3}{
        \pgfplotsset{every tick label/.append style={font=\huge}}
    
    \pgfplotsset{every axis/.append style={
                       label style={font=\huge}}}
        \begin{tikzpicture}
            \begin{axis}[
                width=\textwidth,
                grid=both,
                minor tick num=2,
                xlabel={Time Synchronization Constant (sec.) $\varepsilon$},
                ylabel={Size of Messages (bytes)},
                legend style={nodes={scale=1, transform shape}, font=\Huge},
            ymode = log,
            log ticks with fixed point,
            ytick = {1, 5, 10, 50, 100, 500},
            y label style={font=\huge},
            x label style={font=\huge},
            xtick = {1, 2, 3, 4, 5},
            legend pos = north west
                ]
                \addplot[color=brown, mark=square*, ultra thin, mark options={solid, fill=brown, scale=1}] coordinates {
(1, 20)
(2, 107.2)
(3, 164)
(4, 215.5)
(5, 260)
                };
                \addlegendentry{$k=5$}
                
                \addplot[color=black, mark=square*, ultra thin, mark options={solid, fill=red, scale=01}] coordinates {
(1, 16)
(2, 76)
(3, 126.4)
(4, 172.8)
(5, 176)
                };
                \addlegendentry{$k=4$}
                
                \addplot[color=blue, mark=square*, ultra thin, mark options={solid, fill=blue, scale=1}] coordinates {
(1, 12)
(2, 55.2)
(3, 79.2)
(4, 91.2)
(5, 94)
                };
                \addlegendentry{$k=3$}
                
                \addplot[color=purple, mark=square*, ultra thin, mark options={solid, fill=purple, scale=1}] coordinates {
(1, 8)
(2, 32)
(3, 36)
(4, 38)
(5, 38.2)
                };
                \addlegendentry{$k=2$}
                
                \addplot[color=cyan, mark=square*, ultra thin, mark options={solid, fill=cyan, scale=1}] coordinates {
(1, 4)
(2, 4)
(3, 4)
(4, 4)
(5, 4)
                };
                \addlegendentry{$k=1$}

            \end{axis}
        \end{tikzpicture}
    \hfill
    \subcaptionbox{Number of Streams\label{graph:message_numStream}}
      {\scalebox{0.3}{
        \pgfplotsset{every tick label/.append style={font=\huge}}
    
    \pgfplotsset{every axis/.append style={
                       label style={font=\huge}}}
        \begin{tikzpicture}
            \begin{axis}[
                width=\textwidth,
                grid=both,
                minor tick num=2,
                xlabel={Number of Streams $|\allStream|$},
                ylabel={Size of Messages (bytes)},
                legend style={nodes={scale=1, transform shape}, font=\Huge},
            ymode = log,
            log ticks with fixed point,
            ytick = {1, 5, 10, 50, 100, 500, 1000},
            y label style={font=\huge},
            x label style={font=\huge},
            xtick = {2, 3, 4, 5, 7, 10},
            legend pos = north west
                ]
                \addplot[color=brown, mark=square*, ultra thin, mark options={solid, fill=brown, scale=1}] coordinates {
(2, 68)
(3, 92)
(4, 260)
(5, 520)
(7, 710)
(10, 890)
                };
                \addlegendentry{$k=5$}
                
                \addplot[color=black, mark=square*, ultra thin, mark options={solid, fill=red, scale=01}] coordinates {
(2, 64)
(3, 88)
(4, 236.8)
(5, 330.4)
(7, 490)
(10, 740)
                };
                \addlegendentry{$k=4$}
                
                \addplot[color=blue, mark=square*, ultra thin, mark options={solid, fill=blue, scale=1}] coordinates {
(2, 48)
(3, 72)
(4, 148)
(5, 181.6)
(7, 292.8)
(10, 466.4)
                };
                \addlegendentry{$k=3$}
                
                \addplot[color=purple, mark=square*, ultra thin, mark options={solid, fill=purple, scale=1}] coordinates {
(2, 32)
(3, 48)
(4, 88)
(5, 113.6)
(7, 161.6)
(10, 283.2)
                };
                \addlegendentry{$k=2$}
                
                \addplot[color=cyan, mark=square*, ultra thin, mark options={solid, fill=cyan, scale=1}] coordinates {
(2, 4)
(3, 4)
(4, 4)
(5, 4)
(7, 4)
(10, 4)
                };
                \addlegendentry{$k=1$}

            \end{axis}
        \end{tikzpicture}
    \hfill
    \subcaptionbox{Different \lola Specification\label{graph:message_formula}}
      {\scalebox{0.3}{
        \pgfplotsset{every tick label/.append style={font=\huge}}
    
    \pgfplotsset{every axis/.append style={
                       label style={font=\huge}}}
        \begin{tikzpicture}
            \begin{axis}[
                width=\textwidth,
                grid=both,
                minor tick num=2,
                xlabel={Number of Streams $|\allStream|$},
                ylabel={Size of Messages (bytes)},
                legend style={nodes={scale=1, transform shape}, font=\Huge},
            ymode = log,
            log ticks with fixed point,
            ytick = {1, 5, 10, 50, 100, 500, 1000},
            y label style={font=\huge},
            x label style={font=\huge},
            xtick = {1, 2, 3, 4, 5, 7, 10},
            legend pos = north west
                ]
                \addplot[color=brown, mark=square*, ultra thin, mark options={solid, fill=brown, scale=1}] coordinates {
(2, 48)
(3, 72)
(4, 148)
(5, 181.6)
(7, 292.8)
(10, 466.4)
                };
                \addlegendentry{\texttt{arithExp}, $B(0.8)$}
                
                \addplot[color=black, mark=square*, ultra thin, mark options={solid, fill=black, scale=01}] coordinates {
(2, 12)
(3, 12)
(4, 12)
(5, 12)
(7, 12)
(10, 12)
                };
                \addlegendentry{\texttt{logicExp}, $B(0.8)$}

                \addplot[color=blue, mark=square*, ultra thin, mark options={solid, fill=blue, scale=1}] coordinates {
(2, 76)
(3, 115)
(4, 236)
(4, 290)
(7, 468)
(10, 746)
                };
                \addlegendentry{\texttt{arithExp}, $B(0.5)$}
                
                \addplot[color=cyan, mark=square*, ultra thin, mark options={solid, fill=cyan, scale=01}] coordinates {
(2, 56)
(3, 91)
(4, 196)
(5, 258)
(7,  412)
(10, 661)
                };
                \addlegendentry{\texttt{logicExp}, $B(0.5)$}

                \addplot[color=red, mark=square*, ultra thin, mark options={solid, fill=red, scale=1}] coordinates {
(2, 92)
(3, 138)
(4, 292)
(5, 326)
(7, 571)
(10, 965)
                };
                \addlegendentry{\texttt{arithExp}, $B(0.2)$}
                
                \addplot[color=green, mark=square*, ultra thin, mark options={solid, fill=green, scale=01}] coordinates {
(2, 98)
(3, 142)
(4, 304)
(5, 349)
(7, 582)
(10, 982)
                };
                \addlegendentry{\texttt{logicExp}, $B(0.2)$}

            \end{axis}
        \end{tikzpicture}
      
    \caption{Impact of different parameters on message size for synthetic data.}

 \vspace{-2mm}
\end{figure*}
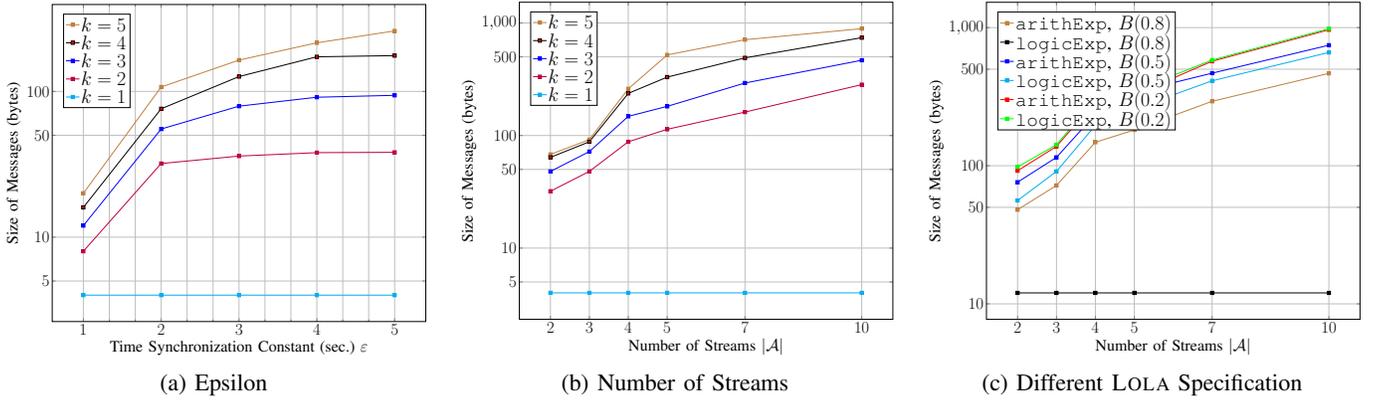

\vspace{2mm}
\noindent{\bf Type of Stream Expression.} Stream expressions can be divided into two major types, one consisting of arithmetic operations and the other involving logical operations. Arithmetic operations can evaluate to values in the order of $O(|\allStream| . \epsilon)$, where as logical operations can only evaluate to either $\tru$ or $\fals$. When the monitors have high readability of the distributed stream, it is mostly the case, that the monitor was able to evaluate the stream expression. Thus, we observe in Fig.~\ref{graph:runtime_formula} that the runtime grows exponentially for evaluating arithmetic expressions but is linear for logical expressions. However, with low readability of the computation, irrespective of the type of expression, both takes exponential time since neither can completely evaluate the stream expression. So, each monitor has to generate all possible streams.

Similarly, for high readability and logical expressions, the message size is constant given the monitor was was able to evaluate the stream expression. However with low readability, message size for evaluating logical expressions matches with that of its arithmetic counterpart. This can be seen in Fig.~\ref{graph:message_formula} and is due to the fact, that with low readability, complete evaluation of the expression is not possible at a monitor and thus needs to send the rewritten expression with the values observed to the other monitors where it will be evaluated.

\vspace{2mm}
\noindent{\bf Number of Streams.} As the number of streams increases, the number of events increase linearly and thereby making exponential increase in the number of possible synchronous streams (due to interleavings). This can be seen in Fig.~\ref{graph:runtime_numStream}, where the runtime increases exponentially with increase in the number of streams in the distributed stream. Similarly, in Fig.~\ref{graph:message_numStream}, increase in the number of streams linearly effects the number of unique values that the \lola expression can evaluate to and there-by increasing the size of the message.

\subsection{Case Studies: Decentralized ICS and Flight Control RV}
\label{subsec:ics}

We put our runtime verification approach to the test with respect to several industrial control 
system datasets that includes data generated by a (1) Secure Water Treatment plant (SWaT)~\cite{swat}, comprising of six processes, corresponding to different physical and control components; (2) a Power Distribution system~\cite{power} that includes readings from four phaser measurement unit (PMU) that measures the electric waves on an electric grid, and (3) a Gas Distribution system~\cite{gas} that includes messages to and from the PLC. In these ICS, we monitor for correctness of system properties. Additionally we monitor for mutual separation between all pairs of aircraft in RACE~\cite{race} dataset, that consists of SBS messages from aircrafts. For more details about each of the systems along with the \lola specifications refer to the Appendix~\ref{sec:ics-dataset}.

For our setting we assume, each component has its own asynchronous local clock, with varying 
time synchronization constant. Next we discuss the results of verifying different ICS with respect 
to \lola specifications. 

\paragraph*{\bf Result Analysis}

\begin{figure}
\centering
{\scalebox{0.6}{\pgfplotsset{every tick label/.append style={font=\huge}}
    
    \pgfplotsset{every axis/.append style={
                       label style={font=\huge}}}

\begin{tikzpicture}[scale=0.5]
\begin{axis}[
width = \textwidth, 
grid = both, 
legend pos = north west, 
ymode = log, 
xlabel = {Time-Synchronization constant $\epsilon$}, 
ylabel={Average \% of False-Positives}, 
xtick = {0.1, 0.5, 1, 2, 3}, 
ytick = {1, 5, 10, 20, 40}, 
legend style={font=\huge}, 
bar width=0.2]

\addplot+[error bars/.cd, y dir=both, y explicit, error mark=diamond*,  error bar style={color=blue}
]
coordinates{
    (3, 32)    +-  (0, 4.1)
    (1, 10.5)    +-  (0, 1.4)
    (0.5, 5.1)    +-    (0, 0.8)
    (0.25, 2.4)   +-   (0, 0.3)
    (0.1, 0.8)    +-    (0, 0.16)
};
\addlegendentry{SWaT}

\addplot+[error bars/.cd, y dir=both, y explicit, error mark=diamond*,  error bar style={color=red}
]
coordinates{
    (3, 37)    +-  (0, 3.9)
    (1, 12.1)    +-  (0, 1.2)
    (0.5, 5.9)    +-    (0, 0.7)
    (0.25, 2.8)   +-   (0, 0.3)
    (0.1, 0.93)    +-    (0, 0.13)
};
\addlegendentry{Power Distribution}

\addplot+[error bars/.cd, y dir=both, y explicit, error mark=diamond*,  error bar style={color=brown}
]
coordinates{
    (3, 34)    +-  (0, 4.5)
    (1, 11.1)    +-  (0, 1.8)
    (0.5, 5.4)    +-    (0, 0.6)
    (0.25, 2.6)   +-   (0, 0.2)
    (0.1, 0.85)    +-    (0, 0.15)
};
\addlegendentry{Gas Distribution}

\addplot+[error bars/.cd, y dir=both, y explicit, error mark=diamond*,  error bar style={color=black}
]
coordinates{
	(3, 2.7)    +-	(0, 0.3)
    (1, 0.88)    +-	(0, 0.15)
    (0.5, 0.43)    +-	(0, 0.07)
    (0.25, 0.2)   +-	(0, 0.02)
    (0.1, 0.07)    +-	(0, 0.001)
};
\addlegendentry{RACE}

\end{axis}

\end{tikzpicture}}}
\caption{False-Positives for ICS Case-Studies}
\label{graph:ics-result}
\vspace{-7mm}
\end{figure}
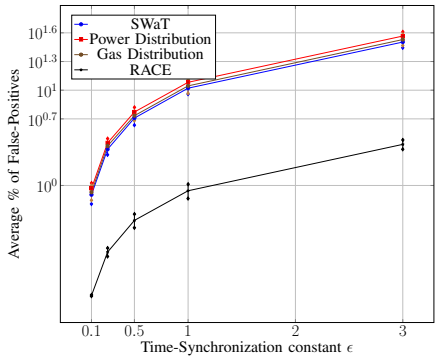

We employed same number of monitors as the number of components for each of the ICS case-studies and 
divided the entire airspace into 9 different ones with one monitor responsible for each. We observe 
that our approach does not report satisfaction of system property when there 
has been an attack on the system in reality (false-negative). However, due to the assumption of 
partial-synchrony among the components, our approach may report false positives, i.e., it reports a 
violation of the system property even when there was no 
attack on the system. As can be seen in Fig.~\ref{graph:ics-result}, with decreasing time 
synchronization constant, the number of false-positives reduce as well. This is due to the fact that 
with decreasing $\epsilon$, less events are considered to be concurrent by the monitors. This makes 
the partial-ordering of events as observed by the monitor closer to the actual-ordering of events 
taking place in the system. 

We get significantly better result for aircraft monitoring with fewer false-positives compared to 
the other dataset. This can be attributed towards Air Traffic Controllers maintaining greater 
separation between two aircrafts than the minimum that is recommended. As part of our monitoring of 
other ICS, we would like to report that our monitoring approach could successfully detect several 
attacks which includes underflow and overflow of tank and sudden change in quality of water in SWaT, 
differentiate between manual tripping of the breaker from the breaker being tripped due to a 
short-circuit in Power Distribution and Single-point data injection in Gas distribution.

\section{Related Work}
\label{sec:related}

Online predicate detection for both centralized and decentralized monitoring setting have been 
extensively studies in~\cite{cgnm13,mg05}. Extensions to more expressive temporal operators are 
introduced in~\cite{og07,mb15}. Monitoring approaches introduced 
in~\cite{cgnm13,og07,mb15} considers a fully asynchronous distributed system. An 
SMT-based predicate detection solution has been introduced in~\cite{vyktd17}. Runtime Verification 
for {\em synchronous} distributed system has been studied in~\cite{ds19,cf16,bf16}. The assumption 
of a common global clock shared among all the components act as a major shortcoming of this 
approach. Finally, fault-tolerant monitoring, where monitors can crash, has been investigated 
in~\cite{bfrrt16} for asynchronous and in~\cite{kb18} for synchronized distributed processes.

Runtime Verification of stream-based specification was introduced in~\cite{lola05, chlsst18}, where 
the occurrence of the events was assumed to be synchronous. To extend the stream-based runtime 
verification to more complex systems, one where the occurrence of events is asynchronous,
a real-time based logic was introduced in~\cite{t19, lssst19, lssss20}. However, these methods fall 
short to verify large geographically separated distributed system, due to their assumption regarding 
the presence of a shared global clock. On the contrary, we assume the presence of a clock 
synchronization algorithm which limits the maximum clock skew among components to a constant. 
This is a realistic assumption since different components of a large industrial system have their 
own clock and it is certain to have a skew between them. A similar SMT-based solution was studied 
for \LTL and \MTL specifications in~\cite{ganguly20, ganguly22} respectively, which we extend to 
include a more expressive stream-based specification.


\section{Conclusion}
\label{sec:concl}

In this paper, we studied distributed runtime verification w.r.t. to the popular stream-based specification language \lola. 
We propose a online decentralized monitoring approach where each monitor takes a set of associated 
\lola specification and a partial distributed stream as input. By assuming partial synchrony 
among all streams and by reducing the verification problem into an SMT problem, we were able to 
reduce the complexity of our approach where it is no longer dependent on the time synchronization 
constant. We also conducted extensive synthetic experiments, verified system properties of large 
Industrial Control Systems and airspace monitoring of SBS messages. Comparing to machine learning-based approaches to verify the 
correctness of these system, our approach was able to produce sound and correct results with 
deterministic guarantees. As a better practice, one can also use our RV approach 
along with machine-learning based during training or as a safety net when detecting system violations.

For future work, we plan to study monitoring of distributed systems where monitors themselves are vulnerable 
to faults such as crash and Byzantine faults. This will let us design a technique with faults and 
vulnerabilities mimicking a real life monitoring system and thereby expanding the reach and 
application of runtime verification on more real-life safety critical systems.


\bibliographystyle{IEEEtran}
\bibliography{bibliography}

\newpage
\section{Appendix}
\label{sec:appendix}

\subsection{\lola Syntax}
\label{sec:moreLOLA}

A stream expression is constructed as follows:
\begin{itemize}
    \item If $c$ is a constant of type $\type$, then $c$ is an atomic stream expression of 
    type $\type$
    \item If $s$ is a stream variable of type $\type$, then $s$ is an atomic stream expression 
    of type $\type$.
    \item If $f:\type_1\times \type_2 \times \cdots \type_k \rightarrow \type$ is a k-ary 
    operator 
    and for $1 \leq i \leq k$, $e_i$ is an expression of type $\type_i$, then 
    $f(e_1, e_2, \cdots, e_k)$ is a stream expression of type $\type$
    \item If $b$ is a stream expression of type \texttt{boolean} and $e_1, e_2$ are stream 
    expressions of type $\type$, then $\ite(b, e_1, e_2)$ is a stream expression of type 
    $\type$, where $\ite$ is the abbreviated form of \textit{if-then-else}.
    \item If $e$ is a stream expression of type $\type$, $c$ is a constant of type $\type$ and 
    $i$ is an integer, then $e[i, c]$ is a stream expression of type $\type$. $e[i, c]$ refers 
    to the value of the expression $e$ offset by $i$ positions from the current position. In 
    case the offset takes it beyond the end or before the beginning of the stream, then the 
    {\em default} value is $c$.
\end{itemize}

Furthermore, \lola can be used to compute {\em incremental statistics}, where a given a stream, 
$\stream$, a function, $f_{\stream}(v, u)$, computes a measure, where $u$ represents the 
measure thus far and $v$, the current value. Given a sequence of values, $v_1, v_2, \cdots, v_n$, 
with a default value $d$, the measure over the data is given as
$$u = f_{\stream}(v_n, f_{\stream}(v_{n-1}, \cdots, f_{\alpha}(v_1, d)))$$
Example of such functions include \textit{count}, $f_{\mathit{count}}(v, u) = u+1$, \textit{sum}, 
$f_{\mathit{sum}}(v, u) = u+v$, \textit{max}, $f_{\mathit{max}}(v, u) = \max\{v, u\}$, among 
others. Aggregate functions like \textit{average}, can be defined using two incremental functions, 
\textit{count} and \textit{sum}.

\subsection{Proofs}
\label{sec:proofs}

\begin{lemma}
Let $\allStream = \{S_1, S_2, \cdots, S_n \}$ be a distributed system and $\varphi$ be an \lola 
specification. Algorithm~\ref{alg:monitor} terminates when monitoring a terminating distributed 
system.
\end{lemma}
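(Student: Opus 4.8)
The plan is to exploit the fact that Algorithm~\ref{alg:monitor} consists of a single bounded \textbf{for} loop, so that termination reduces to two obligations: bounding the number of loop iterations, and showing that every step inside the body halts. First I would observe that a \emph{terminating} distributed system has streams of finite length $N+1$, so $N$ is a fixed finite bound and the loop over $j = 0, \ldots, N$ executes exactly $N+1$ times. The entire burden then shifts to the loop body, and in particular to the finiteness of the data each step manipulates.

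Second, I would establish that the event space is finite: the set of local states $\Events = \bigcup_{i,j}\stream_i(j)$ has cardinality $(N+1)\cdot|\allStream|$. From this, the reading of the partial view, the broadcast of $\LS_j$, the reception of the $|\Monitors|$ messages, and the prioritized union $\LC(\Pi_j) = \biguplus_i \LS^i_j$ are all finite operations over finite data. The receive step additionally halts by the bounded-delay assumption, under which every message sent at local time $\LTime$ arrives by $\LTime + \epsilon_M$, and the iteration length is chosen to cover this delay.

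The crux, and the step I expect to be the main obstacle, is the evaluation $\LS_j \leftarrow [(\Events,\hb)\models_{PS}\varphi_\stream]$. Here I would first show that $\Sr(\Events,\hb)$ is finite: each consistent cut is a subset of the finite set $\Events$, and the partial-synchrony assumption fixes the happened-before order of any two events more than $\epsilon$ apart, so the frontier at each index ranges over only $O(\epsilon^{|\allStream|})$ admissible choices; hence the set of valid sequences $\cc_0\cc_1\cdots\cc_N$, and therefore $\Sr(\Events,\hb)$, is finite. For a fixed synchronous stream I would then argue that $\val(e_i)(j)$ is computed by descending the dependency graph of $\varphi$, and that this recursion bottoms out after finitely many substitutions provided the specification is well-formed, i.e. its dependency graph has no zero-weight cycle, as is standard for \lola. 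Each synchronous evaluation therefore terminates, and since there are finitely many streams, $\LS_j$ is produced in finite time.

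Finally, I would close by addressing the SMT realization of this evaluation (Section~\ref{sec:smt}), which repeatedly solves for a fresh consistent cut until no new unique valuation remains. Since each satisfying assignment contributes a distinct evaluation and there are only finitely many such evaluations (bounded by $|\Sr(\Events,\hb)|$), the constraint excluding previously found values renders the instance unsatisfiable after finitely many calls, so this inner loop halts. Combining the finite iteration bound with the termination of each body step yields termination of Algorithm~\ref{alg:monitor}. The difficulty is thus concentrated entirely in bounding the interleavings and the SMT enumeration; the remaining obligations are routine finiteness checks.
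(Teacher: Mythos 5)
Your proposal is correct, but it proves termination by a genuinely different decomposition than the paper. You reduce everything to finiteness: the outer loop is bounded by the finite stream length $N+1$, the event set $\Events$ is finite, the set of interleavings $\Sr(\Events,\hb)$ is finite because consistent cuts are subsets of $\Events$ whose frontiers are constrained to $O(\epsilon^{|\allStream|})$ choices, and the SMT enumeration of unique valuations must exhaust itself because the uniqueness-blocking constraint eventually makes the instance unsatisfiable. The paper instead argues operationally: it posits a \emph{stop} signal broadcast when the system terminates, upon which each monitor evaluates its terminal associated equations, and then splits into two cases --- all required values observed (trivial) versus some values unobserved, where termination hinges on substituting the \emph{default} value for any stream value that will never be observed. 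The two arguments locate the difficulty in different places. The paper's concern is liveness of a single monitor that might otherwise block forever on a partially evaluated equation referencing data past the end of the computation or never delivered; the default substitution is the mechanism that unblocks it. Your concern is the combinatorial one --- that the enumeration of interleavings and the repeated SMT calls could in principle run forever --- which the paper's proof never addresses explicitly, so your argument is in that respect more complete. Conversely, you handle the paper's central worry only implicitly (the bounded-delay receive and the default clause in the semantics of $e[k,c]$ do the work you would need the stop-signal/default-substitution argument for), so if you were to polish this you should state explicitly that unobservable or out-of-range references are resolved by defaults rather than awaited, which is exactly the case split the paper makes.
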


\begin{proof}
First, we note that our algorithm is designed for terminating system, also, note that a terminating 
program only produces a finite distributed computation. In order to prove the lemma, let us assume 
that the system send out a \textit{stop} signal to all monitor processes when it terminates. When 
such a signal is received by a monitor, it starts evaluating the output stream expression using the 
terminal associated equations. This might arise to two cases. One where all the values required for 
the evaluation has been observed or one where the values required for the evaluation has not been 
observed. Although the termination of the monitor process for the first case is trivial, the 
termination of the monitor process for the second case is dependent upon replacing such unobserved 
stream value by the default value of the stream expression. Thus, terminating the monitor process 
eventually.
\end{proof}









\begin{theorem}
Algorithm~\ref{alg:monitorUpdt} solves the problem stated in Section~\ref{sec:problem}.
\end{theorem}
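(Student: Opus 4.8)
The plan is to establish both directions of the set equality $\Result^i = \big[(\Events,\hb)\models_{PS}\varphi\big]$ for every monitor $M_i$, i.e.\ soundness ($\subseteq$: the decentralized algorithm never fabricates a value the centralized monitor would not produce) and completeness ($\supseteq$: the monitors jointly recover every value in the centralized evaluation). Since by definition $\big[(\Events,\hb)\models_{PS}\varphi\big]$ is obtained by evaluating $\varphi_\stream$ over every synchronous stream in $\Sr(\Events,\hb)$, the argument reduces to showing that the per-round SMT enumeration together with the amalgamation step $\LC$ reproduces exactly this set. I would carry this out by induction on the evaluation round $r\in[1,\lceil N/k\rceil]$, maintaining the invariant that after round $r$ the set $\LS^i_{r+1}[0]$ held by \emph{every} monitor coincides with the set of associated equations a centralized monitor with the global view would obtain over the horizon covered so far, with each equation determinable from the globally observed events fully evaluated (as a set of values, per Definition~\ref{def:psync-eval-model}) and all others retained symbolically.

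The base of the argument is a centralized correctness claim: the SMT constraints of Section~\ref{subsec:smt_const} force the uninterpreted function $\rho$ to range over precisely the valid sequences of consistent cuts, so that the frontiers $\front(\rho(0))\front(\rho(1))\cdots$ enumerated by iterated satisfiability calls are exactly the elements of $\Sr(\Events,\hb)$; evaluating $\varphi_\stream$ on each then yields $\big[(\Events,\hb)\models_{PS}\varphi\big]$ for a full-view monitor. This is where the consistent-cut constraint, the monotonicity constraint $\rho(i)\subseteq\rho(i+1)$, and the frontier-timing constraint $\delta(e)=i$ must be checked to jointly characterize the partially synchronous interleavings, and where the additional uniqueness constraint guarantees the enumeration is both exhaustive and non-redundant.

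For the decentralized reduction I would appeal to the consistency of the monitor observations (Definition~\ref{def:view_const}): coverage ($\forall e\in\Events.\exists i.\,e\in\Events_i$) guarantees that every event needed to evaluate any $s_i(j)$ is observed by at least one monitor, and the exclusive-or condition rules out conflicting observations. The rewriting rules of Section~\ref{sec:progress} (Cases~1--3) then produce, at each monitor, a symbolic local state $\LS^i_r$ that is faithful to its partial stream, leaving unobserved dependencies as stream variables and substituting concrete values for observed ones. The crux is that the prioritized union $\LC=\biguplus_{i}\LS^i_r$ of Section~\ref{subsec:union} recombines these: wherever one monitor contributes a fully evaluated equation it overrides the symbolic versions, and two partial equations that together cover all dependencies merge into a fully evaluated one. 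Using coverage, I would show that the merged equation for each $s_i(j)$ becomes exactly the centralized equation, closing the inductive step; the $\epsilon$-extension of each round's window (Section~\ref{subsec:final}) ensures no interleaving straddling a round boundary is lost or spuriously added.

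I expect the main obstacle to be the prioritized-union step together with the round-boundary handling. One must argue precisely that $\biguplus$ is well defined and associative on the symbolic equation sets so that the final verdict is independent of message-arrival order, and that the $\epsilon$-wide overlap between consecutive windows is both necessary and sufficient to preserve the cross-boundary consistent cuts guaranteed by the partial-synchrony relation. Termination of the procedure, needed for $\Result^i=\bigcup_r\LS^i_r[0]$ to be well defined, is supplied by Lemma~\ref{thm:terminate}, so the remaining work is purely the soundness/completeness bookkeeping sketched above.
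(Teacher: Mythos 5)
Your plan is correct and covers the same three essential ingredients as the paper's proof, but packages them differently. The paper argues in three sequential steps: (1) a set equality $\Sr(\Events,\hb)=\Sr(\Events_1.\Events_2\cdots\Events_{N/k},\hb)$ proved by double inclusion on consistent cuts, showing that splitting the computation into rounds of length $k$ loses nothing; (2) a case analysis on the dependency graph (value observed locally, observed by some other monitor, or not yet observed until a future round) showing every associated equation is eventually fully evaluated; and (3) an appeal to reliable all-pairs communication to conclude all monitors converge to the same result. Your induction on rounds with the invariant that every monitor's $\LS^i_{r+1}[0]$ matches the centralized equation set over the covered horizon absorbs the paper's Steps~2 and~3 into a single inductive step, and treats the paper's Step~1 as the round-boundary/$\epsilon$-overlap concern. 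You also surface two obligations the paper's proof silently assumes rather than discharges: that the SMT constraints make $\rho$ enumerate \emph{exactly} $\Sr(\Events,\hb)$ (exhaustively and without spurious interleavings), and that the prioritized union $\biguplus$ is well defined and order-independent so that all monitors compute the same merge. Flagging these is a genuine improvement in rigor over the published argument, though to complete your proof you would still need to supply the double-inclusion argument for the round decomposition that the paper makes explicit.
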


\begin{proof}
We prove the soundness and correctness of Algorithm~\ref{alg:monitorUpdt}, by dividing it into three 
steps. In the first step we prove that given a \lola specification, $\varphi$, the values of the 
output stream when computed over the distributed computation, $(\Events, \hb)$, of length $N$ is 
the same as when the distributed computation is divided into $\frac{N}{k}$ computation rounds of 
length $k$ each. Second, we prove that for all time instances the stream equation is eventually 
evaluated after the communication round. Finally we prove the set of all evaluated result is 
consistent over all monitors in the system.

{\bf Step 1: } From our approach, we see that the value of a output stream variable, is evaluated 
on the events present in the consistent cut with time $j$. Therefore, we can reduce the proof to:
$$\Sr(\Events, \hb) = \Sr(\Events_1.\Events_2 \cdots \Events_{\frac{N}{k}}, \hb)$$
\begin{itemize}
	\item $(\Rightarrow)$ Let $\cc_k$ be a consistent cut such that $\cc_k$ is in $\Sr(\Events, \hb)$
	, but not in $\Sr(\Events_1.\Events_2 \cdots \Events_{\frac{N}{k}}, \hb)$, for some 
	$k \in [0, |\Events|]$. This implies that the frontier of $\cc_k$, 
	$\front(\cc_k) \not\subseteq \Events_1$ and $\front(\cc_k) \not\subseteq \Events_2$ and 
	$\cdots$ and $\front(\cc_k) \not\subseteq \Events_{\frac{N}{k}}$. However, this is not 
	possible, as according to the computation round construction in Section~\ref{subsec:final}, 
	there must be a $\Events_i$, where $1 \leq i \leq {\frac{N}{k}}$ such that 
	$\front(\cc_k) \subseteq \Events_i$. Therefore, such $\cc_k$ cannot exist, and 
	$(\stream_1, \stream_2, \cdots, \stream_n) \in \Sr(\Events, \hb) \implies (\stream_1, \stream_2, \cdots, \stream_n) \in \Sr(\Events_1.\Events_2 \cdots \Events_{\frac{N}{k}}, \hb)$.

	\item $(\Leftarrow)$ Let $\cc_k$ be a consistent cut such that $\cc_k$ is in 
	$\Sr(\Events_1.\Events_2 \cdots \Events_{\frac{N}{k}}, \hb)$ but not in $\Sr(\Events, \hb)$ 
	for some $k \in [0, |\Events|]$. This implies, $\front(\cc_k) \subseteq \Events_i$ and 
	$\front(\cc_k) \not\subseteq \Events$ for some $i \in [1, \frac{N}{k}]$. However, this is not 
	possible due to the fact that $\forall i \in [1, \frac{N}{k}]. \Events_i \subset \Events$. 
	There, such $\cc_k$ cannot exist, and 
	$(\stream_1, \stream_2, \cdots, \stream_n) \in \Sr(\Events_1.\Events_2 \cdots \Events_{\frac{N}{k}}, \hb) \implies (\stream_1, \stream_2, \cdots, \stream_n) \in \Sr(\Events, \hb)$.
\end{itemize}

Therefore, $\Sr(\Events, \hb) = \Sr(\Events_1.\Events_2 \cdots \Events_{\frac{N}{k}}, \hb)$.

{\bf Step 2: } Given a output stream expression $s_i$ and the dependency graph 
$G = \langle V, E \rangle$, for each $\langle s_i, s_k, w \rangle \in E$, evaluating the value at 
time instance $j \in [1, N]$, $\stream_k(j+w) \neq \natural$ or $\stream_k(j+w) = \natural$ or 
$\stream_k(w+j)$ not observed.
\begin{itemize}
	\item If $\stream_k(j+w) \neq \natural$, then we evaluate the stream expression

	\item If $\stream_k(j+w) = \natural$, there exists at-least one other monitor where 
	$\stream_k(j+w) \neq \natural$. Thereby evaluating the stream expression, followed by sharing 
	the the evaluated result with all other monitors

	\item If $\stream_k(w+j)$ not observed, then at some future evaluation round and at some monitor
	$\stream_k(j+w) \neq \natural$ and there-by evaluating the stream expression $s_i$
\end{itemize}

Similarly, it can be proved for $\langle s_i, t_k, w \rangle \in E$.

{\bf Step 3: } Each monitor in our approach is fault-proof with communication taking place 
between all pairs of monitors. We also assume, all messages are eventually received by the monitors. 
This guarantees all observations are either directly or indirectly read by each monitor.

Together with Step 1 and 2, soundness and correctness of Algorithm~\ref{alg:monitor} is proved.
\end{proof}

\begin{theorem}
Let $\varphi$ be a \lola specification and $(\Events, \hb)$ be a distributed 
stream consisting of $|\allStream|$ streams. The message complexity of Algorithm~\ref{alg:monitorUpdt} with $|\Monitors|$ monitors is
$$O\big(\epsilon^{|\allStream|} N |\Monitors|^2\big) \;\;\; \Omega(N |\Monitors|^2)$$
\end{theorem}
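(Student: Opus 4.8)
The plan is to factor the total message complexity into three independent quantities — the number of communication rounds, the number of point-to-point messages per round, and the size of a single message — and then read off both bounds from their product, with the two bounds differing only in how tightly the per-message size is estimated. (Here ``message complexity'' is the total communication volume, i.e.\ the summed size of all messages, since the $\epsilon^{|\allStream|}$ factor clearly tracks message size rather than message count.) First I would count the structural factors: Algorithm~\ref{alg:monitorUpdt} runs exactly $\lceil N/k\rceil$ rounds (the outer \textbf{for} loop), and in each round every one of the $|\Monitors|$ monitors broadcasts its local state $\LS^i_r$ to all others (line~10), generating $|\Monitors|(|\Monitors|-1) = \Theta(|\Monitors|^2)$ point-to-point messages. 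This pins down the $|\Monitors|^2$ factor shared by both bounds.

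The crux of the \emph{upper} bound is estimating $|\LS^i_r|$. The local state holds one associated equation for each of the $k$ time instances in the round, and for each instance it records every distinct value arising across the interleavings in $\Sr(\Events_i,\hb_i)$. By the partially synchronous semantics (Definition~\ref{def:psync-eval-model}), the value of any stream $\stream_i$ at a fixed instance $j$ ranges over the window $[\max\{0,j-\epsilon+1\},\,\min\{N,j+\epsilon-1\}]$, of size at most $2\epsilon-1 = O(\epsilon)$. Since a consistent cut is determined by choosing a frontier \emph{independently} on each of the $|\allStream|$ streams, the number of distinct frontiers — hence distinct evaluations per instance — is at most $O(\epsilon^{|\allStream|})$. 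Therefore $|\LS^i_r| = O(k\,\epsilon^{|\allStream|})$, and multiplying the three factors gives $\lceil N/k\rceil \cdot O(|\Monitors|^2) \cdot O(k\,\epsilon^{|\allStream|}) = O(\epsilon^{|\allStream|} N |\Monitors|^2)$, with the $k$'s cancelling.

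For the matching $\Omega(N|\Monitors|^2)$ \emph{lower} bound I would specialize to the most favourable case, $\epsilon = 1$, where partial synchrony collapses to the synchronous setting and each instance admits a single value. Even then, the correctness argument (Step~3 of the proof of Theorem~\ref{thm:corr-sound}) forces each monitor to propagate each of its $k$ per-round evaluations to every other monitor for the verdicts to agree, so each message has size $\Omega(k)$ and each round costs $\Omega(k|\Monitors|^2)$. Over $\lceil N/k\rceil$ rounds this yields $\lceil N/k\rceil \cdot \Omega(k|\Monitors|^2) = \Omega(N|\Monitors|^2)$, again independent of $k$.

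The main obstacle is the upper-bound estimate on the number of distinct per-instance evaluations: one must confirm that the product structure of the independent per-stream windows genuinely caps the interleaving count at $\epsilon^{|\allStream|}$, rather than permitting a larger blow-up through the cross-stream dependencies of the \lola equations. I expect this to follow from the observation that $\Sr(\Events,\hb)$ is fully determined by the choice of frontier on each stream, so the dependency structure of $\varphi$ only deterministically recombines these $O(\epsilon^{|\allStream|})$ base frontier-tuples without multiplying their number.
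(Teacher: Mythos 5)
Your proposal is correct and follows essentially the same route as the paper: $\lceil N/k\rceil$ rounds, $\Theta(|\Monitors|^2)$ point-to-point messages per round, and a per-round content factor bounded above by $O(\epsilon^{|\allStream|})$ unique evaluations per time instance (from the $(2\epsilon-1)$-wide windows on each of the $|\allStream|$ streams) and below by $\Omega(1)$. Your bookkeeping of $k$ is in fact slightly more careful than the paper's --- you make the $k$'s cancel explicitly via the $\Omega(k)$ and $O(k\,\epsilon^{|\allStream|})$ message-size estimates, whereas the paper simply treats the outer loop as $O(N)$ --- but the decomposition and the source of each factor are the same.
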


\begin{proof}

We analyze the complexity of each part of Algorithm~\ref{alg:monitorUpdt}. The algorithm has a 
nested loop. The outer loop iterates for $\lceil N/k \rceil$ times, that is $O(N)$. The inner loop 
is dependent on the number of unique evaluations of the stream expression.

\begin{itemize}
	\item {\bf Upper-bound} Due to our assumption of partial-synchrony, each event's time of 
	occurrence can be off by $\epsilon$. This makes the maximum number of unique evaluations 
	in the order of $O(\epsilon^{|\allStream|})$.

	\item {\bf Lower-bound} The minimum number of unique evaluations is in the order of $\Omega(1)$.
\end{itemize}

In the communication phase, each monitor sends $|\Monitors|$ messages to all other monitors and 
receives $|\Monitors|$ messages from all other monitors. That is $|\Monitors|^2$. Hence the message 
complexity is
$$O\big(\epsilon^{|\allStream|} N |\Monitors|^2\big) \;\;\; \Omega(N |\Monitors|^2)$$
As a side note, we would like to mention that in case of high readability of the monitors and 
evaluation of logical expression, the complexity is closer to the lower-bound, whereas with low 
readability and arithmetic expressions, the complexity is closer to the upper bound.

\end{proof}

\subsection{Industrial Control Systems}
\label{sec:ics-dataset}

\paragraph{SWaT Dataset} Secure Water Treatment (SWaT)~\cite{swat} utilizes a fully operational scaled down water 
treatment plant with a small footprint, producing 5 gallons/minute of doubly filtered water. It 
comprises of six main processes corresponding to the physical and control components of the water 
treatment facility. It starts from process P1 where it takes raw water and stores it in a tank. 
It is then passed through the pre-treatment process, P2, where the quality of the water is assessed 
and maintained through chemical dosing. The water then reaches P3 where undesirable materials are 
removed using fine filtration membranes. Any remaining chlorine is destroyed in the dechlorination 
process in P4 and the water is then pumped into the Reverse Osmosis system (P5) to reduce inorganic 
impurities. Finally in P6, water from the RO system is stored ready for distribution.

The dataset classifies different attack on the system into four types, based on the point and stage 
of the attack: Single Stage-Single Point, Single Stage-Multi Point, Multi Stage-Single Point and 
Multi Stage-Multi Point. We for the scope of this paper are the most interested in the attacks 
either covering multiple stages or multiple points. Few of the \lola specifications used are listed 
below.

\begin{lstlisting}
input FIT-101 : uint
input MV-101 : bool
input LIT-101 : uint
input P-101 : bool
input FIT-201 : uint
output inflowCorr := ite(MV-101 == true, FIT-101 > 0, FIT-101 == 0)
output outflowCorr := ite(P-101 == true, FIT-201 > 0, FIT-201 == 0)
output tankCorr := ite(MV-101 == true || P-101 == true, LIT-101 = LIT-101[-1, 0] + FIT-101[-1, 0] - FIT-201[-1, 0])
\end{lstlisting}

where \texttt{FIT-101} is the flow meter, measuring inflow into raw water tank, \texttt{MV-101} is a 
motorized valve that controls water flow to the raw water tank, \texttt{LIT-101} is the level 
transmitter of the raw water tank, \texttt{P-101} is a pump that pumps water from raw water tank to 
the second stage and \texttt{FIT-201} is the flow transmitter for the control dosing pumps. The 
above \lola specification checks the correctness of the inflow meter and valve pair (resp. outflow 
meter and pump pair) in \texttt{inflowCorr} (resp. \texttt{outflowCorr}) output expressions. On 
the other hand, \texttt{tankCorr} checks if the water level in the tank adds up to the in-flow 
and out-flow meters.

\begin{lstlisting}
input AIT-201 : uint
input AIT-202 : uint
input AIT-203 : uint
output numObv := numObv[-1, 0] + 1
output NaClAvg := (NaClAvg[-1, 0] * numObv[-1, 0] + AIT-201) / numObv
output HClAvg := (HClAvg[-1, 0] * numObv[-1, 0] + AIT-202) / numObv
output NaOClAvg := (NaOClAvg[-1, 0] * numObv[-1, 0] + AIT-202) / numObv
\end{lstlisting}

where \texttt{AIT-201}, \texttt{AIT-202} and \texttt{AIT-203} represents the NaCl, HCl and NaOCl 
levels in water respectively and \texttt{NaClAvg}, \texttt{HClAvg} and \texttt{NaOClAvg} keeps 
a track of the average levels of the corresponding chemicals in the water, where as \texttt{numObv} 
keeps a track of the total number of observations read by the monitor.

\paragraph{Power System Attack Dataset} Power System Attack Dataset~\cite{power} consists of three datasets 
developed by Mississippi State University and Oak Ridge National Laboratory. It consists of readings 
from four phaser measurement unit (PMU) or synchrophasor that measures the electric waves on an 
electric grid. Each PMU measures 29 features consisting of voltage phase angle, voltage phase 
magnitude, current phase angle, current phase magnitude for Phase A-C, Pos., Neg. and Zero. It 
also measures the frequency for relays, the frequency delta for relay, status flag for relays, etc. 
Apart from these 116 PMU measurements, the dataset also consists of 12 control panel logs, snort 
alerts and relay logs of the 4 PMU.

The dataset classifies into either natural event/no event or an attack event. Few of the \lola 
specifications used are listed below. The first attempts to detect a single-line-to-ground (1LG) 
fault.

\begin{lstlisting}
input R1-I : float
input R2-I : float
input R1-Relay : bool
input R2-Relay : bool
output R1-I-low := R1-I < 200
output R1-I-high := R1-I > 1000
output R2-I-low := R2-I < 200
output R2-I-high := R2-I > 1000
output 1LG := R1-I-high && R2-I-high && R1-Relay[+2, false] && R2-Relay[+2, false] && R1-I-low[+4, false] && R2-I-low[+4, false]
\end{lstlisting}

where \texttt{R1-I} and \texttt{R2-I} represents the current measured at the R1 and R2 PMU 
respectively. Additionally, \texttt{R1-Relay} and \texttt{R2-Relay} keeps a track of the state 
of the corresponding relay. As a part of the 1LG attack detection, we first categorize the 
current measured as either low or high depending upon the amount of the current measured. We 
categorize an attack as 1LG if both R1 and R2 detects high current flowing followed by the relay 
tripping followed by low current.

\begin{lstlisting}
input R1-PA1-I : float
input R1-PA2-I : float
input R1-PA3-I : float
output phaseBal := (R1-PA1-I - R1-PA2-I) <= 10 && (R1-PA2-I - R1-PA3-I) <= 10 && (R1-PA3-I - R1-PA1-I) <= 10
\end{lstlisting}

where \texttt{R1-PA1-I}, \texttt{R1-PA2-I} and \texttt{R1-PA3-I} are the amount of current 
measured by R1 PMU at Phase A, B and C respectively. The monitor helps us to check if the load 
on three phases are equally balanced.

\paragraph{Gas Distribution System} Gas Distributed System~\cite{gas} is a collection of labeled Remote 
Terminal Unit (RTU) telemetry streams from a Gas pipeline system in Mississippi State 
University's Critical Infrastructure Protection Center with collaboration from Oak Ridge 
National Laboratory. The telemetry streams includes messages to and from the Programmable Logic 
Controller (PLC) under normal operations and attacks involving command injection and data 
injection attack. The feature set includes the pipeline pressure, setpoint value, command data 
from the PLC, response to the PLC and the state of the solenoid, pump and the Remote Terminal Unit 
(RTU) auto-control.

One of the most common data injection attack is \textit{Fast Change}. Here the reported pipeline 
pressure value is successively varied to create a lack of confidence in the correct operation of the 
system. The corresponding \lola specification monitoring against such attack is mentioned below:

\begin{lstlisting}
input PipePress : float
input response : bool
output fastChange := ite(response, mod(PipePress - PipePress[-1, 1000]) <= 10, true)
\end{lstlisting}

where \texttt{PipePress} records the measured pipeline pressure and \texttt{response} is a flag 
variable signifying a message to the PLC. Here we consider the default pressure is 1000 psi and 
the permitted pressure change per unit time is 10 psi (these can be changed according to the 
demands of the system). Similarly we have \lola specifications monitoring other data injection 
attacks such as \textit{Value Wave Injection}, \textit{Setpoint Value Injection}, \textit{Single 
Data Injection}, etc. and command injection attacks such as \textit{Illegal Setpoint}, 
\textit{Illegal PID Command}, etc.

\paragraph{RACE Dataset} Runtime for Airspace Concept Evaluation (RACE)~\cite{race} is a framework developed by NASA that 
is used to build an event based, reactive airspace simulation. We use a dataset developed using 
this RACE framework. This dataset contains three sets of data collected on three different days. 
Each set was recorded at around 37 N Latitude and 121 W Longitude. The dataset includes all 8 
types of messages being sent by the SBS unit by using a Telnet application to listen to port 30003, 
but we only use the messages with ID `MSG 3' which is the Airborne Position Message and includes a 
flight’s latitude, longitude and altitude using which we verify the mutual separation of all pairs 
of aircraft. Furthermore, calculating the distance 
between two coordinates is computationally expensive, as we need to factor in parameters such as 
curvature of the earth. In order to speed up distance related calculations, we consider
a constant latitude distance of 111.2km and longitude distance of 87.62km, at the cost of a
negligible error margin. The corresponding \lola specification is mentioned below:

\begin{lstlisting}
input flight1_alt : float
input flight1_lat : float
input flight1_lon : float
input flight2_alt : float
input flight2_lat : float
input flight2_lon : float
output distDiff := sqrt(pow(flight1_alt - flight2_alt, 2) + pow((flight1_lon - flight2_lon)*87620, 2) + pow((flight1_lat - flight2_lat)*111200, 2))
output check := distDiff > 500
\end{lstlisting}

\end{document}